\theoremstyle{plain}
\newtheorem{theorem}{Theorem}
\newtheorem{lemma}[theorem]{Lemma}
\newtheorem{proposition}[theorem]{Proposition}
\newtheorem{claim}[theorem]{Claim}
\theoremstyle{definition}
\newtheorem{definition}[theorem]{Definition}
\newcommand{\newauthor}[3]{
    \newcounter{#1comment}
    \setcounter{#1comment}{1}
    \expandafter\newcommand\csname #1\endcsname[1]{%
            \par\noindent
            \todo[inline, size = \small, backgroundcolor = {#3}, caption = {}]{
                \arabic{#1comment}:
                {##1} --~\textbf{#2}
            }
            \addtocounter{#1comment}{1}
    }
    \expandafter\newcommand\csname #1changed\endcsname[1]{%
        \ifdraft
            \colorbox{#3}{
                ##1
            }%
        \else
            ##1%
        \fi
    }
}
\newcommand{\set}[1]{\{ #1 \}}
\newcommand{\eqperiod}{\enspace .}
\newcommand{\eqcomma}{\enspace ,}
\newcommand{\ie}{i.e.,\ }
\newcommand{\floor}[1]{\lfloor #1 \rfloor}
\newcommand{\eps}{\varepsilon}
\newcommand{\calC}{\mathcal{C}}
\newcommand{\calP}{\mathcal{P}}
\newcommand{\N}{\mathbb{N}}
\newcommand{\red}{\color{BrickRed}}
\newcommand{\blue}{\color{MidnightBlue}}
\newcommand{\gray}{\color{Gray!30!white}}
\newcommand{\textb}[1]{\texttt{\upshape #1}}
\newcommand{\lred}{{\red\textb{[}}}
\newcommand{\rred}{{\red\textb{]}}}
\newcommand{\tred}{{\red\textb{[\!]}}}
\newcommand{\lblue}{{\blue\textb{[}}}
\newcommand{\rblue}{{\blue\textb{]}}}
\newcommand{\tblue}{{\blue\textb{[\!]}}}
\newcommand{\Bracket}[1][n]{\mathrm{WideBr}_{#1}}
\newcommand{\ExBracket}[1][n]{\mathrm{Br}_{#1}}
\newcommand{\Ind}{\textsc{Ind}}
\DeclareMathOperator{\ancestors}{ancestors}
\DeclareMathOperator{\area}{area}
\DeclareMathOperator{\poly}{poly}
\DeclareMathOperator{\supp}{supp}
\crefname{step}{Step}{Steps}
\Crefname{step}{Step}{Steps}
\crefname{prop}{Property}{Properties}
\Crefname{prop}{Property}{Properties}
\newcommand{\indexBeforeCall}{{i}}
\newcommand{\indexAfterCall}{{i+1}}
\begin{document}

\mbox{}\vspace{12mm}

\begin{center}
{\huge Supercritical Tradeoffs for Monotone Circuits}
\\[1cm] \large

\setlength\tabcolsep{1.5em}
\begin{tabular}{cccc}
Mika Göös&
Gilbert Maystre&
Kilian Risse&
Dmitry Sokolov\\[-1mm]
\small\slshape EPFL&
\small\slshape EPFL&
\small\slshape EPFL&
\small\slshape EPFL
\end{tabular}
	
\vspace{2em}
	
\large
{\today}

\normalsize

\vspace{1em}
\end{center}
	
\begin{abstract}
\noindent
We exhibit a monotone function computable by a monotone circuit of quasipolynomial size such that any monotone circuit of polynomial depth requires exponential size. This is the first size--depth tradeoff result for monotone circuits in the so-called \emph{supercritical} regime. Our proof is based on an analogous result in proof complexity: We introduce a new family of unsatisfiable~3-CNF formulas (called \emph{bracket formulas}) that admit resolution refutations of quasi\-polynomial size while any refutation of polynomial depth requires exponential size.
\end{abstract}

\section{Introduction}

In monotone circuit complexity, a \emph{size--depth tradeoff} result states that monotone circuits cannot
be compressed to shallow depth without blowing up their size. (See \cref{sec:related-work-circuits} for
examples from prior work.) Our main result is such a size--depth tradeoff in the so-called
\emph{supercritical} regime of parameters. This type of tradeoff was first envisioned
by~\cite{Beame16,Razborov16} and refers to a blow-up in one parameter~$S$ (for us, size) when another
parameter~$D$ (for us, depth) is restricted to a value above its \emph{critical value}, defined as the
largest value that $D$ can take when maximised over all $n$-bit functions (and when~$S$ is
unrestricted). Concretely, if we consider monotone circuits of fanin~2, the worst-case upper bound for
depth is~$D\leq\Theta(n)$: any $n$-bit monotone function can be computed by a monotone circuit of depth
$n$, and some (e.g., random) functions require depth~$\Omega(n)$. We show that restricting the depth to
any polynomial bound,~$D\leq n^c$ where~$c>1$ is any constant, can still cause a huge blow-up in size.

\begin{restatable}[Main result for circuits]{theorem}{TheoremMainCkt}
    \label{thm:main-circuit}
    There is a monotone $f\colon \{0, 1\}^n \to \{0, 1\}$ such that:
    \begin{enumerate}[label = {\itshape (\roman*)}, noitemsep]
        \item \label{it:ckt-ub}
            There is a monotone circuit computing $f$ in size $n^{O(\log n)}$.
        \item \label{it:ckt-lb}
            Every monotone circuit computing $f$ in depth $n^{O(1)}$ requires size $\exp(n^{\Omega(1)})$.
    \end{enumerate}
\end{restatable}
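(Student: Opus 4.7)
The natural strategy is to use the proof-complexity statement flagged in the abstract as the engine and translate it to monotone circuits via monotone feasible interpolation. Concretely, I would expect the bracket formulas to be unsatisfiable 3-CNFs presented in the \emph{split} form $\phi_n(x, y, z) = A_n(x, y) \wedge B_n(x, z)$, with $A_n$ monotone in the shared variables $x$ and $B_n$ antimonotone in $x$, so that the Krajíček--Razborov--Pudlák interpolation machinery applies. The target function $f$ is then defined to be the canonical monotone interpolant, namely the indicator of $\set{x : A_n(x, \cdot) \text{ is satisfiable}}$.

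Part~\ref{it:ckt-ub} should fall out essentially for free: once the proof-complexity half of the paper has produced a quasipolynomial-size resolution refutation of $\phi_n$, the standard clause-by-clause interpolation compiles it into a monotone circuit of size $n^{O(\log n)}$ computing~$f$. The more delicate ingredient is a depth-tracking interpolation that also runs in the opposite direction, and it is this direction that powers part~\ref{it:ckt-lb}: a putative monotone circuit for $f$ of depth $n^{O(1)}$ and size $S$ is compiled back into a resolution refutation of $\phi_n$ of depth $n^{O(1)}$ and size $\poly(S)$, contradicting the supercritical proof-complexity lower bound unless $S = \exp(n^{\Omega(1)})$. I would expect this reverse compilation to proceed gate-by-gate, using the split structure so that on any input $x$ either an $A$-witness $y$ or a $B$-witness $z$ can be queried to guide the descent through the circuit, producing a decision-DAG of clauses whose depth matches the circuit depth.

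The main obstacle, in my view, is not the interpolation bridge---by now a reasonably standard toolkit once the split form is arranged correctly---but the proof-complexity statement itself: engineering a 3-CNF that separates the quasipolynomial and polynomial-depth refutation regimes by an exponential factor in size. Supercritical tradeoffs of this kind are delicate because shallow proofs must somehow be forced to repeat an enormous amount of work, while a clean recursive structure still has to survive so that deep proofs succeed. I would therefore expect the bulk of the effort to go into the bracket construction and its two analyses: a recursive refutation matching the $n^{O(\log n)}$ upper bound, and a lower-bound argument---likely via a Prover--Adversary game or a decision-DAG width characterization---showing that any refutation living on a shallow DAG must be exponentially wide. The secondary but nontrivial checkpoint is verifying that the bracket formulas really do admit the split $A(x,y)\wedge B(x,z)$ packaging, since if monotonicity on the $x$-side is broken the whole interpolation strategy collapses.
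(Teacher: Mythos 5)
Your strategy leans on a \emph{reverse} interpolation step that does not exist as a tool. Monotone feasible interpolation (Kraj\'{\i}\v{c}ek--Razborov--Pudl\'ak) is a one-way street: from a small refutation of a split formula $A(x,y)\wedge B(x,z)$ you extract a small monotone circuit for the interpolant. There is no general theorem compiling a monotone circuit for the interpolant \emph{back into} a resolution refutation, let alone one that preserves both size and depth, and that reverse direction is precisely what your argument for part~\ref{it:ckt-lb} needs. The standard deployment of interpolation for lower bounds is the other way around --- a monotone circuit lower bound for the interpolant yields a proof-size lower bound --- which is useless here, where a proof-complexity lower bound is what we already have. Your ``gate-by-gate'' sketch would have to invent this converse, and that is not a routine adaptation.

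The paper sidesteps this by using the dag-like \emph{lifting} theorem of~\cite{Garg20} (in the form of~\cite{LovettMMPZ22,Fleming22}), stated as \cref{th:lifting-ckt}. Lifting, unlike interpolation, gives a genuine two-way correspondence for the lifted object: from a width-$w$ resolution refutation of $F$ one builds a monotone circuit of size $n^{O(w)}$ for a function $f_F$, and from any monotone circuit $C$ for $f_F$ one extracts a resolution refutation of $F$ with $\mathrm{size}(\Pi)=O(\mathrm{size}(C))$ and $\mathrm{depth}(\Pi)=O(\mathrm{depth}(C)\cdot\log\mathrm{size}(C)/\log n)$. The price is that $f_F$ is not the interpolant of a split bracket formula (the bracket formula is not in split form, and no monotonicity split is arranged); it is the canonical monotone function extracted from $F$ in the communication/circuit lifting framework, and the construction passes through composing $F$ with an indexing gadget $\Ind_t$. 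Once this is understood, the rest of the proof is exactly as you anticipate: \cref{thm:width-ub} gives the upper bound via the forward direction, and \cref{thm:depth-lb} (restated through a shallow refutation having small width, hence small depth, contradiction) gives the lower bound via the backward direction. So the bulk of your proposal --- that the heavy lifting is in the bracket construction and its width--depth tradeoff --- is right, but the bridge you chose would not get you across.
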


Supercritical tradeoffs for monotone circuits were conjectured by~\cite{Garg20,Fleming21,Fleming22}. Our \cref{thm:main-circuit} does not yet, however, quantitatively reach the parameters stated in those conjectures. They asked for a blow-up to occur already for depth $D\leq S^\varepsilon$ where $\varepsilon>0$ is a constant and $S$ is the size of the circuit in~\ref{it:ckt-ub}. We leave it for future work to quantitatively sharpen our tradeoff. Nevertheless, our result is the first supercritical tradeoff for monotone circuits (although see~\cref{sec:concurrent-work} for a discussion of a concurrent work~\cite{Rezende24}).

Our proof follows the by-now standard \emph{lifting paradigm}: We start by proving an analogous tradeoff
result in propositional proof complexity, and then apply a lifting theorem~\cite{Garg20} to import that
result to monotone circuit complexity. We state our proof complexity tradeoff for the standard
\emph{resolution} proof system; see the textbook~\cite[\S18]{Jukna12} for an introduction.

\begin{restatable}[Main result for proofs]{theorem}{TheoremMainProof}
    \label{thm:main-proof}
    There is an $n$-variate $3$-CNF formula $F$ such that:
    \begin{enumerate}[label={\itshape (\roman*')},noitemsep]
        \item \label{it:res-ub}
            There is a resolution refutation of $F$ in size $n^{O(\log n)}$.
        \item \label{it:res-lb}
            Every resolution refutation of $F$ in depth $n^{O(1)}$ requires size $\exp(n^{\Omega(1)})$.
    \end{enumerate}
\end{restatable}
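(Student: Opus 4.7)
The plan is to build the bracket formula $F$ as a 3-CNF encoding of a combinatorial search problem on a sequence of ``bracket slots'', each labelled with one of a constant number of symbols from $\{\lblue,\rblue,\tblue,\lred,\rred,\tred\}$, so that satisfying assignments would correspond to globally consistent bracket placements; the clauses enforce only local compatibility between adjacent slots together with a mismatch at the boundary, making $F$ unsatisfiable. The key structural choice, which the two macros \texttt{WideBr} and \texttt{Br} in the preamble hint at, is to nest an ``expander-like'' wide bracket structure inside an outer recursive bracketing; the wide structure will be used for the lower bound while the outer nesting enables the upper bound.

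For the upper bound \ref{it:res-ub}, I would give a divide-and-conquer refutation that exploits the nesting. Because an internal bracket only interacts with its immediate children through a constant-size interface, one can case-split on the ``type'' of each half of a segment, recursively refute each side conditioned on the interface, and glue the two halves together with a constant number of resolution steps. The recursion tree has depth $O(\log n)$ and constant branching per level, yielding $n^{O(\log n)}$ clauses overall. Crucially, I expect this natural refutation to be \emph{deep}—its depth is polynomial in $n$ rather than polylogarithmic—because each recursive call still has to walk through the variables of its segment; that is why the theorem only restricts depth in the lower bound.

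For the lower bound \ref{it:res-lb}, the plan is to follow the now-standard ``random restriction $+$ hard-core subformula'' strategy, adapted to the supercritical setting. Concretely: (a) fix an arbitrary resolution refutation $\Pi$ of depth $D \leq n^c$, (b) apply a random restriction that, with high probability, kills almost all variables but preserves within each maximal nested level of the bracket structure a copy of a formula known to require large resolution width, such as a Tseitin or pebbling contradiction on an expander of size $n^{\Omega(1)}$, and (c) show that a shallow refutation of $F$ restricts to a shallow refutation of each of these hard subformulas, so at least one of them must be refuted with many clauses. Summing over the independently surviving hard cores forces total size $\exp(n^{\Omega(1)})$ in $\Pi$.

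The main obstacle is step (c). The usual size-width inequality $S \geq \exp(\Omega(w^2/n))$ gives nothing when $w$ is allowed to be as large as $D = n^c$, so a direct width argument fails in the supercritical regime. I would instead need a \emph{depth-sensitive} hardness amplification: a combinatorial lemma showing that a refutation of depth $D$ of the wide-bracket gadget effectively decomposes into roughly $2^{n^{\Omega(1)}}/D^{O(1)}$ disjoint ``subrefutations'', each charged a separate clause. Engineering the wide-bracket gadget so that this amplification goes through—while still allowing the clean recursive upper bound above—is where the bulk of the technical work will sit; everything else, including the final 3-CNF encoding and the $n$-variable bookkeeping, is routine.
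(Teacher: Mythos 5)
Your proposal is not the route the paper takes, and the plan you sketch for the lower bound has a genuine gap.

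On the architecture: the formula $F$ in the theorem is \emph{not} the raw bracket formula. The paper first proves a \emph{width--depth} tradeoff for the bracket formula $\ExBracket$ itself: there is a resolution refutation of index-width $O(\log n)$ (\cref{thm:width-ub}), and every width-$w$ refutation requires depth $n^{\Omega(\log n/\log w)}$ (\cref{thm:depth-lb}). It then sets $F\coloneqq\ExBracket\circ\Ind^{n}_{n^{1+\varepsilon}}$ and invokes the dag-like lifting theorem of \cite{Garg20,LovettMMPZ22,Fleming22}, which converts a width lower bound for the base formula into a size lower bound for the composed formula, while depth is essentially preserved. The upper bound in \ref{it:res-ub} is then simply the standard simulation of a low-width refutation together with \Cref{cl:liftedres-up}, not a hand-built divide-and-conquer refutation of the bracket formula as you describe (though your intuition about a deep $O(\log n)$-level recursion is qualitatively the same spirit as the prover strategy of \cref{sec:prover-strategy}).

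On the lower bound: your plan (b)--(c) of ``random restrictions revealing many disjoint Tseitin/pebbling cores'' is not what the paper does and, as you yourself flag in step (c), has no mechanism to work. There is no expander or pebbling graph embedded in the bracket formula; it is a pure nesting/matching principle. The paper's lower bound is instead a \emph{prover--adversary game} argument: the adversary maintains a ``container'' (a cleaned-up, locally consistent, monotone configuration of top-level bracket pairs dominating the memory) and carefully moves a large separating interval so that, as long as memory is bounded by $w$, the game cannot end before $n^{\Omega(\log n/\log w)}$ rounds (\cref{sec:container,sec:adversary-strategy,sec:buffer,sec:proof-cover}). This is a width--depth tradeoff, proved combinatorially, with no random restriction or size--width inequality anywhere. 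The conversion to a \emph{size}--depth tradeoff happens entirely in the lifting step, which you omit. Absent lifting, a direct size lower bound for $\ExBracket$ would be hopeless: it is refutable in width $O(\log n)$ and hence in size $n^{O(\log n)}$ at bounded depth too (there is no supercritical phenomenon at the level of the base formula by itself), so your step (c) would be attempting to prove something false.

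In short, the missing idea is to decouple the problem into (i) a width--depth tradeoff for an uncomposed base formula, proved via an adversary strategy, and (ii) a black-box lifting theorem that trades width for size. Your ``depth-sensitive hardness amplification'' lemma is precisely the thing that the lifting theorem supplies, and trying to rebuild it from random restrictions inside the bracket structure does not go through.
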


The previous best size--depth tradeoffs for resolution were due to Fleming, Robere, and Pitassi~\cite{Fleming22} with a more direct proof given by Buss and Thapen~\cite{Buss24}. They obtain a tradeoff that is able to match (and go beyond) all our parameters in~\ref{it:res-ub}--\ref{it:res-lb}, but only for a~CNF formula~$F$ having quasipolynomially many clauses. That is, their result does not apply when the depth restriction is superlinear in the size of $F$. As the authors discuss~\cite[\S7]{Fleming22}, this limitation prevents their results from lifting to a monotone circuit tradeoff. The question of overcoming this limitation has been asked several times previously~\cite{Razborov16,Berkholz20,Fleming22}. Our result now addresses this issue, since our formula is a~{3-CNF} and hence of size~$O(n^3)$. (The same limitation was also overcome in the concurrent work discussed in~\cref{sec:concurrent-work}.)

\subsection{New technique: Bracket formulas}

To prove \cref{thm:main-proof} we introduce a family of CNF formulas based on a novel \emph{bracket principle}. Consider a string $s \in \set{\lred, \lblue, \rred, \rblue}^n$ of red and blue brackets that begins with a red opening bracket, $s_1=\lred$, ends with a blue closing bracket, $s_n=\rblue$, and is \emph{well-parenthesised}: every opening bracket can be paired with a subsequent closing bracket of the same colour, and the bracket pairs are correctly nested (forbidding the pattern $\lred\cdots\lblue\cdots\rred\cdots\rblue$). For example,
\[
s ~=~
\lred\lblue\lred\rred\rblue\lred\rred\rred\lred\rred\lred\lblue\lred\lblue\rblue\lblue\rblue\rred\rblue\rred\lblue\lred\lblue\lblue\rblue\rblue\rred\rblue\lblue\lblue\rblue\lblue\rblue\rblue.
\]
We claim that every such string necessarily contains at least one occurrence of the substring $\rred\lblue$. To see this, we can erase all but the top-level brackets to obtain the string
\[
\hphantom{s ~=~ }~
\lred\hphantom{\lred\lblue\lred\rred\rblue\rred}\rred\lred\rred\lred\hphantom{\lblue\lred\lblue\rblue\lblue\rblue\rred\rblue}\rred\lblue\hphantom{\lred\lblue\lblue\rblue\rblue\rred}\rblue\lblue\hphantom{\lblue\rblue\lblue\rblue}\rblue.
\]
Here it is clear that the region of red brackets has to transition over to the blue region, which is where we find the substring $\rred\lblue$.

\paragraph{CNF encoding with pointers.}
We can now encode the bracket principle as an unsatisfiable \emph{bracket formula}, denoted $\ExBracket$. That is, we express as a CNF the negation of the tautology
\begin{equation}\label{eq:taut}
(s_1=\lred) \land  (s_n=\rblue) \land (s\text{ is well-parenthesised}) \quad\Longrightarrow\quad \exists i\in[n]\colon s_is_{i+1}=\rred\lblue.
\end{equation}
To help express the property that $s$ is well-parenthesised, we equip each bracket with a pointer to its mate. That is, we actually consider strings over the extended alphabet $\Sigma\coloneqq \set{\lred, \lblue, \rred, \rblue}\times [n]$. For a string $s\in\Sigma^n$ to be well-parenthesised, we require that the pointers encode a correct pairing of the brackets. For example, if $\smash{s_i=(\lred,j)}$, then we require that $s_j=(\rred,i)$ and $\smash{j>i}$. The correct nesting property can be expressed by forbidding patterns for every 4-tuple of indices. Finally, if we encode symbols in the alphabet as $O(\log n)$-bit strings, it is straightforward to write down a $\poly(n)$-size $O(\log n)$-width CNF for $\neg$\eqref{eq:taut}. The width can be further reduced to~3 by using standard encoding tricks. The full formal definition of $\ExBracket$ is given in~\cref{sec:bracket-formula}. 

\paragraph{Proof overview.}
We use the bracket formulas $\ExBracket$ to prove~\cref{thm:main-proof}. It is a recurring theme in proof complexity that the size of a resolution refutation $\Pi$ is often closely related to its \emph{width}, defined as the maximum width of a clause appearing in $\Pi$. Namely, any refutation of width~$w$ has size at most $n^{O(w)}$. Conversely, a formula $F$ requiring resolution width $w$ can be lifted (e.g.,~\cite{Garg20}) to a related formula $F'$ requiring refutations of size $n^{\Omega(w)}$. (A somewhat weaker converse holds also without lifting in case $w$ is large enough~\cite{BenSasson01}.) We start by showing that the bracket formulas exhibit a \emph{width--depth} tradeoff according to the following two theorems. The proofs appear in \cref{sec:upper-bound,sec:lower-bound}, respectively.
\begin{restatable}{theorem}{TheoremUB}
    \label{thm:width-ub}
    The bracket formula $\ExBracket$ admits a resolution refutation of width $O(\log n)$.
\end{restatable}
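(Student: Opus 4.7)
The plan is to exploit the standard correspondence between resolution refutations of width~$w$ and winning strategies for the Prover in a memory-$O(w)$ Prover-Adversary game (\`a la Atserias--Dalmau): it suffices to describe a Prover strategy for the bracket formula $\ExBracket$ whose memory never exceeds $O(\log n)$ bits. Since a single alphabet symbol in $\Sigma$ is encoded by $O(\log n)$ bits, this budget accommodates $O(1)$ full positions at once.

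The strategy mirrors the ``chain traversal'' argument preceding the theorem. At each iteration, the Prover holds the complete encoding of a single position $s_i$ that it treats as a top-level $\lred$. Initially $i = 1$: the Prover queries all $O(\log n)$ bits of $s_1$, which by the unit axiom must be of the form $(\lred, j_1)$. The Prover then queries $s_{j_1}$; pointer consistency forces $s_{j_1} = (\rred, 1)$, or a pointer axiom is already falsified. If $j_1 = n$, then $s_n = \rred$ immediately contradicts the unit axiom $s_n = \rblue$ and we are done.

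Otherwise the Prover forgets $s_1$ and queries $s_{j_1+1}$. Three cases arise. If $s_{j_1+1} = \lblue$, the no-$\rred\lblue$ axiom at position $j_1$ is falsified directly. If $s_{j_1+1} = (\lred, j_2)$ with $j_2 > j_1+1$, the Prover forgets $s_{j_1}$ and iterates with $s_{j_1+1}$ as the new current position; since this position strictly increases, the process terminates in at most $n$ iterations. The interesting case is when $s_{j_1+1}$ is a closing bracket $(\sigma, k)$ with $\sigma \in \set{\rred, \rblue}$: the Prover queries $s_k$ (which pointer consistency fixes to $(\lred, j_1+1)$ or $(\lblue, j_1+1)$) and then re-queries $s_1$, so that $\set{s_1, s_{j_1}, s_{j_1+1}, s_k}$ all sit in memory simultaneously, occupying $4 \cdot O(\log n) = O(\log n)$ bits. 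The pairs $(1, j_1)$ and $(k, j_1+1)$ now form a forbidden cross pattern $1 < k < j_1 < j_1+1$, so the relevant nesting axiom is falsified. The degenerate cases $k = 1$ and $k \ge j_1$ are handled separately by pointer conflicts with $s_1$ or $s_{j_1}$.

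The only real obstacle is this closing-bracket subcase: one must briefly keep four positions' encodings in memory and verify that the nesting axiom, combined with the pointer axioms, rules out every remaining Adversary response. Everything else maintains at most two positions in memory and re-queries $s_1$ on demand, so the total memory stays $O(\log n)$ throughout. Via the Atserias--Dalmau correspondence, this translates to a resolution refutation of $\ExBracket$ of width $O(\log n)$.
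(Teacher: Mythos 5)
Your prover strategy — a rightward linear scan that tracks only the current top-level red pair, using $O(1)$ positions of memory — is genuinely different from the paper's and has a real gap. The case-3 analysis is correct only on the very first step, when the left anchor is position $1$. After the prover has advanced via case 2 to a current pair $(i_t, j_t)$ with $i_t>1$, the interval $[1,i_t-1]$ has been forgotten, and the adversary may answer $s_{j_t+1}$ with a closing bracket whose mate $k$ lies in that forgotten region, i.e.\ $k<i_t$. Then the pair $(k,j_t+1)$ \emph{encloses} $(i_t,j_t)$: the two pairs nest rather than cross, so no instance of axiom \ref{a3} is falsified. Re-querying $s_1$ does not help, because $s_1$ was forgotten and the adversary can answer afresh with $s_1=(\lred,m)$ for any $m>j_t+1$, producing a state in which every remembered pair is nested inside a single red envelope; indeed the prover then holds at most one complete bracket pair among assigned coordinates, so it cannot even form a four-index pattern for \ref{a3}. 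The claimed cross ``$1<k<j_1<j_1{+}1$'' hinges on $1<k$, which the adversary controls as soon as the left anchor has moved off position $1$. Your proposal specifies no recovery from this configuration, and the adversary can force it whenever it likes.

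There is also a high-level reason this cannot be patched without changing the strategy. A working prover using $O(1)$ positions would give an index-width-$O(1)$ refutation of $\Bracket$, hence by \cref{prop:index-width} a width-$O(1)$ refutation of $\ExBracket$; over $\poly(n)$ variables this would have at most $\poly(n)$ distinct clauses and therefore depth $\poly(n)$, contradicting the $n^{\Omega(\log n)}$ depth lower bound of \cref{thm:depth-lb} at constant width. So the $\log n$ in \cref{thm:width-ub} must come from the \emph{number of positions} retained, not merely from the $\log n$ bits per position. The paper's prover accordingly keeps $\Theta(\log n)$ positions: it maintains a red pair $R$ and a blue pair $B$ with $B$ to the right of $R$ (seeded by \ref{a1}), repeatedly queries the midpoint of the gap between them, and recurses into the half that acquires a new opposite-colour pair; an area-doubling argument bounds the recursion depth by $O(\log n)$ with $O(1)$ pairs retained per level. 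That two-sided anchoring is exactly what your one-sided scan is missing.
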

\vspace{-1em}
\begin{restatable}{theorem}{TheoremLB}
    \label{thm:depth-lb}
    Every width-$w$ resolution refutation of $\ExBracket$ has
    depth~$\smash{n^{\Omega\bigl(\frac{\log n}{\log w}\bigr)}}$.
\end{restatable}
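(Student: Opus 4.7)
The plan is to prove the depth lower bound via an adversary argument on a decision-DAG model of width-$w$ refutations. A width-$w$, depth-$d$ resolution refutation of $\ExBracket$ can be read as a decision DAG $G$ of depth $d$ whose nodes are labelled by the partial assignment of size at most $w$ obtained by negating the literals of the associated clause; the root (empty clause) is the empty assignment, and every leaf is a partial assignment falsifying some axiom of $\ExBracket$. It therefore suffices to construct, against any such $G$, an adversarial (Delayer) strategy that produces a root-to-leaf path of length $n^{\Omega(\log n/\log w)}$.

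The strategy exploits the recursive structure of the bracket principle. Fix a balanced \emph{bracket template} $T$: a rooted tree of depth $k = \Theta(\log n/\log b)$ and branching factor $b = \poly(w)$, where each internal node corresponds to a matched bracket pair and its children enumerate the top-level matched pairs nested immediately inside; the leaves of $T$ correspond to atomic string positions. The adversary maintains a \emph{frontier node} $v$ of $T$: a node whose subtree is declared to contain the $\rred\lblue$ transition, while within that subtree nothing is yet committed. Outside the subtree at $v$ she commits to a precomputed well-parenthesised default carrying no red-blue transition and answers every query there deterministically.

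The main technical claim is a \emph{descent lemma} showing that moving the frontier down one level of $T$ requires the DAG to traverse at least $n^{\Omega(1)}$ additional edges: intuitively, with only $w$ bits of memory the DAG cannot simultaneously keep track of all $b \gg w$ candidate children of $v$, so the adversary can always steer the refutation into a child whose internal structure is still fresh, and certifying which child contains the transition forces the DAG to essentially re-refute a sub-instance of $\ExBracket$ localised to that child. A careful recursive composition of this claim across all $k$ levels of $T$ should then yield total depth $n^{\Omega(k)} = n^{\Omega(\log n/\log w)}$, matching the stated bound.

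The main obstacle will be the careful treatment of \emph{pointer variables}: a single pointer query can reveal information about two far-apart positions of the string simultaneously, threatening to collapse the recursion. We plan to route pointer queries to the outside-of-frontier default whenever possible and, when forced to commit a pointer inside the frontier, to reveal only the minimal additional structure required, tracked via a potential function counting ``committed endpoints'' inside the current frontier. Showing that this potential degrades by only $\poly(w)$ per query, and that $n^{\Omega(\log n/\log w)}$ queries are required before it forces the DAG to commit an actual $\rred\lblue$ position, is the crux that will complete the proof.
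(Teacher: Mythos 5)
Your high-level instincts are right: recursive adversary, a "separating" region that must be pushed to a leaf, and the observation that pointer variables are the real obstruction. But there are two genuine gaps that would prevent the argument from going through as proposed.

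\textbf{The descent lemma does not compose to $n^{\Omega(k)}$.} As stated, "moving the frontier down one level requires $n^{\Omega(1)}$ additional edges" only gives an \emph{additive} bound of $k\cdot n^{\Omega(1)}$, far below the target $n^{\Omega(k)}$. What you actually need is a \emph{multiplicative} recursion: at each level, the adversary must force the DAG to perform $n^{\Omega(1)}$ \emph{independent} sub-refutations of the next level (so that $C_\ell \geq n^{\Omega(1)}\cdot C_{\ell+1}$), and independence is precisely what is hard to guarantee in a DAG of width $w$ — the prover can try to revisit a memorised state and share work across children. In the paper this is the entire content of the Buffer Lemma (\cref{lem:buffer}) and the legacy/ancestor machinery (\cref{lem:single-legacy}): one shows that bracket pairs created by distinct recursive calls are separated by a distance the prover cannot bridge within memory $w$, so any single pair surviving in the state has a unique legacy and the cost per iteration cannot be amortised. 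Your proposal hints at this ("steer into a fresh child", "re-refute a sub-instance") but never isolates the mechanism that prevents work-sharing, which is where the real difficulty lies.

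\textbf{The pre-fixed template is in tension with pointer queries.} You propose fixing a balanced $b$-ary bracket template $T$ up front and committing to a default well-parenthesisation outside the current frontier subtree. The paper deliberately does \emph{not} do this: the adversary maintains a dynamically evolving ``container'' (\cref{def:container}) — a locally consistent, monotone configuration of bracket pairs — and the interval in which to recurse is chosen \emph{after} seeing what the container currently covers. This flexibility is what lets the adversary absorb a pointer query landing at an arbitrary far-away index: the container grows a bracket pair around it (via the \textsc{Move} operation), at a cost tracked by a Cover Lemma (\cref{lem:cover}). With a pre-fixed $T$, a single pointer query to position $j$ can force you onto a wrong branch of the template, or worse, commit you to structure on both sides of the frontier that is inconsistent with the default you already fixed; your ``potential function on committed endpoints'' is not enough to repair this without effectively reinventing the container abstraction. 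The proposal correctly identifies pointers as the crux, but the proposed remedy (route to the default, reveal minimal structure) lacks the invariant that makes the paper's argument close: that \emph{everything} the prover remembers is covered by $O(w)$ bracket pairs in a monotone configuration whose separation interval stays polynomially large.
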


It is now straightforward to derive our main results (\cref{thm:main-circuit,thm:main-proof}) by a black-box application of the lifting theorem from~\cite{Garg20}, which converts the above width--depth tradeoffs into size--depth tradeoffs. In fact, we get analogous tradeoffs also for the \emph{cutting planes} proof system. The formal proof is given in \cref{sec:lifting}.

\paragraph{Genealogy of bracket formulas.}
Perhaps surprisingly, we did not construct bracket formulas initially for the purpose of proving a tradeoff result. The bracket principle arose serendipitously from our unsuccessful attempts at classifying the \emph{Jordan curve} (JC) theorem within the total search problem theory {\sffamily TFNP} (specifically, the \emph{crossing curves} variant defined by~\cite{Adler16}). While the complexity of JC remains wide open, we currently know~\cite{Hollender24} that the bracket principle reduces to JC, and the complexity of the bracket principle falls somewhere between the \emph{unique end-of-potential-line} class~{\sffamily UEOPL}~\cite{Fearnley2020} and the \emph{end-of-potential-line} class~$\textsf{EOPL}=\textsf{PLS}\cap\textsf{PPAD}$~\cite{Goos24collapse}. For example, \cref{thm:width-ub} can be interpreted as showing that the principle lies in the class {\sffamily PLS} (which is characterised by low-width resolution; see~\cite{Goos24tfnp} for a discussion). Proving the bracket principle complete for any existing {\sffamily TFNP} class is an interesting open problem.

\subsection{Related work: Monotone circuit complexity
} \label{sec:related-work-circuits}

We have claimed \cref{thm:main-circuit} as the first supercritical size--depth tradeoff for monotone circuits. If we are pedantic, however, and consider the model of \emph{unbounded-fanin} monotone circuits, then the critical value for depth becomes $2$: any monotone function can be computed by a monotone DNF, which has depth 2. Under this interpretation, classical results in monotone circuit complexity can indeed be viewed as supercritical tradeoffs for unbounded-fanin circuits. For a prominent example, a long line of work~\cite{Karchmer88,Raz99,Goos18,Rezende16,Rezende20} has culminated in an extreme separation of the monotone analogues of the classes {\sffamily NC} and {\sffamily P}. They exhibit an $n$-bit monotone function~$f$ computed by a linear-size monotone circuit such that any unbounded-fanin monotone circuit computing $f$ in depth $n^{1-\eps}$ (which is larger than the critical value $2$) requires size $\exp(n^{\Omega(1)})$. Similar blow-ups in size also occur for constant-depth monotone circuits, as is known since the 1980s~\cite{Klawe84} with modern results given by Rossman~\cite{Rossman15,Rossman24}.

Given that the notion of supercriticality was first conceived in the context of the resolution proof system~\cite{Beame16,Razborov16} (where the critical value for depth/space is $n$), in this paper, we have chosen to reserve the term \emph{supercritical} for when the critical value for depth is taken to be~$\Theta(n)$. This is also how the term was used in the papers conjecturing a supercritical tradeoff for monotone circuits~\cite{Garg20,Fleming21,Fleming22}. (Note that our \cref{thm:main-circuit} holds no matter what fanin we consider: when we convert unbounded-fanin circuits to bounded-fanin circuits, the depth can increase only by a factor of $\log(\text{fanin})\leq n$.).

\subsection{Related work: Proof complexity}
\label{sec:related-work-proofs}

In proof complexity, the first supercritical tradeoffs were proven by Beame, Beck, and Impagliazzo~\cite{Beame16} (resolution size--space) and Razborov~\cite{Razborov16} (tree-like resolution size/depth--width), with another early work by Berkholz~\cite{Berkholz12} (resolution depth--width). Since then, the phenomenon has been studied extensively for resolution space~\cite{Razborov17space,Berkholz20,Papamakarios23}, as well as for other proof systems such as polynomial calculus~\cite{Beck13}, cutting planes~\cite{Razborov17cp,Fleming22}, and tree-like resolution over parities~\cite{Chattopadhyay24}. As already discussed above, the state-of-the-art for resolution size--depth are given by~\cite{Fleming22,Buss24}.

\paragraph{Hardness condensation.}
In all results cited above (except~\cite{Berkholz12}), the basic technique to obtain supercritical tradeoffs is \emph{hardness condensation}. This technique proceeds as follows.
\begin{enumerate}[label=(\arabic*)]
\item \label[step]{it:hc1}
One starts with a $n$-variate formula $F$ that exhibits a \emph{sub}critical tradeoff between the two parameters of interest. Here, it suffices to consider standard formulas ubiquitous in proof complexity, such as \emph{Tseitin} (used by~\cite{Beame16,Razborov17space,Beck13}) or \emph{pebbling} (used by~\cite{Razborov16,Berkholz20,Fleming22,Buss24,Chattopadhyay24}).
\item \label[step]{it:hc2}
One then \emph{reduces the number of variables} of $F$ while \emph{preserving the original hardness}. Methods to reduce the number of variable include identifying groups of variables (projections) and composing the formula with small gadgets (e.g., {\footnotesize XOR} substitutions, pointers).
\item The resulting formula $F'$ now condenses the original hardness. The hardness parameter of~$F'$ is roughly the same as for $F$, but now the tradeoff has become supercritical: the parameters become larger as a function of the number of variables.
\end{enumerate}

In this work, we do not explicitly employ hardness condensation. Our bracket formulas exhibit robust tradeoffs \emph{naturally} right out of the box. It should be noted, however, that hardness condensation, broadly construed, is a \emph{complete} method for proving size--depth tradeoffs. Any formula witnessing a size--depth tradeoff can be obtained by starting with a pebbling formula in \cref{it:hc1} and then condensing using operations listed in \cref{it:hc2}.\footnote{This is because pebbling/sink-of-dag formulas are \emph{complete} for resolution: Any formula $F$ with a small resolution proof can be reduced, via decision trees, to a pebbling formula $P$. In other words, $F$ is obtained from~$P$ by identifying variables and local gadget composition. See~\cite{Goos24tfnp} for an exposition of this perspective.}

Resolution width--depth tradeoffs (from hardness condensation) are at the heart of recent breakthroughs in showing lower bounds on the stabilisation time of the Weisfeiler--Lehman algorithm for graph isomorphism~\cite{Berkholz23,Grohe23,Grohe23a}. Can our bracket formulas find applications in this line of work? Finally, we mention that hardness condensation has also been studied in circuit complexity~\cite{BureshOppenheim06} (where the technique was first proposed) and query/communication complexity~\cite{Goos24,Hrubes24}.

\subsection{Concurrent work by \texorpdfstring{\cite{Rezende24}}{dRF25}}
\label{sec:concurrent-work}

In concurrent work, de Rezende, Fleming, Janett, Nordstr{\"o}m, and Pang~\cite{Rezende24} have independently obtained a supercritical tradeoff for monotone circuits. They exhibit an $f$ so that:
\begin{enumerate}[label={\itshape (\roman*'')},noitemsep]
\item \label{it:other-ub}
There is a monotone circuit computing $f$ in size $n^c$ for some constant $c>1$.
\item \label{it:other-lb}
Every monotone circuit computing $f$ in depth $n^{1.9}$ requires size $n^{1.5c}$.
\end{enumerate}
Comparing this to our result, the size blow-up in \ref{it:other-lb} is at most polynomial, while our blow-up in \ref{it:ckt-lb} is nearly exponential. On the other hand, their blow-up occurs already at depth that is polynomial in the size upper bound~\ref{it:other-ub}. In this sense, the two results are incomparable.

Their approach, too, is to start with a new supercritical width--depth tradeoff for resolution (a condensed Tseitin formula, building on~\cite{Grohe23a}) with parameters analogous to \ref{it:other-ub}--\ref{it:other-lb} and then lift that to a monotone circuit tradeoff. Since they start with a small polynomial blow-up in proof complexity, they have to be careful that the lifting construction is not too lossy in its parameters. Consequently, they prove a new more optimised lifting theorem based on~\cite{Garg20,LovettMMPZ22}. In comparison, we can afford a black-box application of these theorems. The paper~\cite{Rezende24} also gives applications to the Weisfeiler--Lehman algorithm, strengthening the breakthrough result of~\cite{Grohe23a}. By contrast, we leave it as a direction for further research to find out if bracket formulas can yield such applications.

Some preliminary results of~\cite{Rezende24} were announced already at an Oberwolfach workshop in March 2024. Based on that announcement, Berkholz, Lichter, and Vinall{-}Smeeth~\cite{Berkholz24} have recently obtained new supercritical size--width tradeoffs for tree-like resolution.

\section{Bracket Formulas}
\label{sec:bracket-formula}

In this section, we formally define \emph{bracket formulas} as a particular 3-CNF encoding of the bracket
principle~\eqref{eq:taut}. For convenience, we start by first expressing bracket principles as a system
of constraints over a large alphabet, and then describe how to encode them in binary.

\subsection{Large-alphabet encoding}
We consider strings $s \in \Sigma^n$ over the alphabet
\[
    \Sigma \coloneqq \set{\tred, \tblue, \lred, \lblue, \rred, \rblue} \times [n].
\]
Here we have introduced \emph{trivial brackets} $\tred$ and $\tblue$ for technical convenience (it helps
us avoid parity issues, e.g., requiring $n$ to be even). They represent an open/close bracket pair,
$\lred\, \rred$ or~$\lblue\, \rblue$, compressed to a single symbol. For a symbol $\sigma \in \Sigma$ we
write $\sigma = (b(\sigma), p(\sigma))$ where $b(\sigma)$ is the bracket type and $p(\sigma) \in [n]$ is the
pointer. Given a string $s \in \Sigma^n$ we define the following set of constraints, called axioms. 
\begin{enumerate}[label=(A\arabic*), leftmargin = 3em, itemsep = 0pt]
    \item \label{a1}
        \emph{Start/end of string:} $b(s_1)\in\{\tred,\lred\}$ and $b(s_n)\in\{\tblue,\rblue\}$.
    \item \label{a2}
        \emph{Pointers define bracket pairs:}
        \begin{itemize}[leftmargin = 1em, label=$-$, noitemsep]
            \item If $p(s_i) = j$, then $p(s_j) = i$. We say $s_i$ and $s_j$ are \emph{paired} and write
                $s_i\sim s_j$.
            \item If $s_i\sim s_i$, then $b(s_i)\in\{\tred,\tblue\}$.
            \item If $s_i \sim s_j$ where $i<j$, then $b(s_i)\in\{\lred,\lblue\}$,
                $b(s_j)\in\{\rred,\rblue\}$, and their colours match.
        \end{itemize}
    \item \label{a3}
        \emph{Correct nesting:} If $s_i\sim s_j$ and $s_{i'}\sim s_{j'}$, then we cannot have
        $i<i'<j<j'$.
    \item \label{a4}
        \emph{No red/blue transition:} For every $i\in[n-1]$, we have $b(s_i)b(s_{i+1})\notin
        \{\rred\lblue,\rred\tblue,\tred\lblue,\tred\tblue\}$. 
\end{enumerate} 

The system \ref{a1}--\ref{a4} is an unsatisfiable set of constraints where each constraint depends on at
most $4$ indices of the string $s$. This means there are at most $O(n^4)$ constraints.

\subsection{CNF encodings}

\paragraph{Wide formula.}
Translating a system of~$\poly(n)$ constraints of~$O(1)$-arity over
strings~$s \in \Sigma^n$ to an~$O(\log | \Sigma |)$-CNF formula $F$ of
size~$\poly(n, \lvert \Sigma \rvert)$ is straightforward:
Introduce~$m\coloneqq \log | \Sigma |$ variables per
index~$i \in [n]$. Fix an arbitrary encoding function
$\pi\colon\Sigma\to\{0,1\}^m$ that encodes each symbol~$s_i$ as an
$m$-bit string. Then encode each $O(1)$-arity constraint
separately. It is immaterial here which precise encoding $\pi$ we use;
any choice will work. Note that because every constraint is of
constant arity it involves at most~$O(\log |\Sigma|)$ variables and is
thus translated to at most~$2^{O(\log |\Sigma|)} = \poly |\Sigma|$
clauses. Hence the size of the resulting formula $F$ is
indeed~$\poly(n,\lvert \Sigma \rvert)$.

\paragraph{Narrow formula.}
To be able to lift our tradeoff result from resolution to monotone
circuits we need a constant-width CNF. 
We obtain a constant width encoding of~$F$ by introducing extension
variables~$y_D$ for clauses~$D$ of constant \emph{index-width}: the
\emph{index-width}\footnote{Similar notions of width have previously
  been studied under the name pigeon-width or block-width.} of a
clause denotes the number of indices mentioned and the
\emph{index-width} of a formula is the maximum index-width of any
clause in it. Note that~$F$ has constant index-width since the
original constraints are of constant arity.

More precisely, if~$k$ denotes the index-width of~$F$, then we
introduce an extension variable~$y_D$ for every clause~$D$ of
index-width~$\leq k$ and encode~$F$ as a $3$-CNF~$F'$
of~$\poly(n, |\Sigma|)$ size over these variables: $F'$ consists of
\begin{itemize}
\item \emph{$F$-axioms}: if the clause~$D$ is in~$F$, then~$F'$
  contains the unary clause~$y_D$, and
\item \emph{extension axioms}:
  \begin{itemize}
  \item for every variable~$x$ of~$F$ the formula $F'$ contains
    the clauses~$y_x \lor y_{\neg x}$
    and~$\neg y_x \lor \neg y_{\neg x}$ ensuring that~$y_x$ is the
    negation of~$y_{\lnot x}$, and
  \item for clauses~$A,B,D$ of index-width~$\leq k$
    satisfying~$D = A \lor B$ the formula~$F'$ contains clauses
    enforcing the equivalence~$y_D \leftrightarrow (y_A \lor y_B)$.
  \end{itemize}
\end{itemize}
The index-width required to refute~$F$ is closely related to the width
needed to refute~$F'$. The proofs of the following propositions are
standard and provided in \cref{sec:proof-width-index-width}.

\begin{restatable}{proposition}{PropositionIndexWidth}
  \label{prop:index-width}
  If~$F$ admits an index-width-$w$ depth-$d$ resolution refutation,
  then~$F'$ admits a width-$O(w)$ depth-$O(dw)$ resolution refutation.
\end{restatable}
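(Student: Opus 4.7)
The plan is to translate the $F$-refutation clause by clause into an $F'$-refutation. For a clause $D$ of $F$ with index-width $w_D \leq w$, I would decompose $D$ as $D = D_1 \lor \cdots \lor D_{w_D}$, grouping the literals of $D$ appearing at each mentioned index; each $D_j$ has index-width~$1$, so $y_{D_j}$ is a legitimate extension variable of $F'$. I would then represent $D$ in the $F'$-refutation by the clause $C_D \coloneqq y_{D_1} \lor \cdots \lor y_{D_{w_D}}$, which has width $w_D \leq w$.

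The construction has three pieces. First, each axiom $D$ of $F$ has index-width $\leq k = O(1)$ and appears in $F'$ as a unit clause $y_D$; I would unfold $y_D$ into $C_D$ using $O(k)$ applications of the binary extension axioms for splits like $D = D_1 \lor (D_2 \lor \cdots)$, at cost $O(k)$ depth within width $\leq k$. Second, to simulate a resolution step $A \lor x,\ B \lor \neg x \vdash A \lor B$ from $C_{A \lor x}$ and $C_{B \lor \neg x}$, I would (i) use extension axioms of the form $\neg y_{P \lor x} \lor y_P \lor y_x$ (and analogously for $\neg x$) at the index of~$x$ to extract $y_x$ and $y_{\neg x}$ as standalone literals, (ii) resolve through $\neg y_x \lor \neg y_{\neg x}$ to cancel the pivot, and (iii) at each shared index $i \in \mathrm{ind}(A) \cap \mathrm{ind}(B)$, fuse $y_{A_i}$ and $y_{B_i}$ into $y_{A_i \lor B_i}$ using the axioms $\neg y_{A_i} \lor y_{A_i \lor B_i}$ and $\neg y_{B_i} \lor y_{A_i \lor B_i}$. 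Third, the empty clause of the $F$-refutation transfers directly to the empty clause of $F'$.

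Each simulated $F$-step uses $O(w)$ resolution steps arranged sequentially, and the intermediate clauses stay within width $O(w)$: their payload is the $w$-literal representation of the current clause plus a handful of auxiliary literals (such as $y_x$, $y_{A_r}$, $y_{A_r \lor B_r}$) introduced only transiently while an index is being merged. Composing $d$ such simulations therefore yields a refutation of depth $O(dw)$ and width $O(w)$, as required. The main bookkeeping subtlety will be the case analysis on whether the pivot index $r = \mathrm{ind}(x)$ lies in $\mathrm{ind}(A)$, in $\mathrm{ind}(B)$, in both, or in neither: this controls whether $x$ appears in $C_{A \lor x}$ already fused with $A_r$ as a single literal $y_{A_r \lor x}$ or standalone as $y_x$, and analogously for $C_{B \lor \neg x}$. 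In each subcase the same extract-cancel-fuse recipe applies with minor modifications, and one must verify that the intermediate clauses never exceed width $O(w)$; beyond this bookkeeping the simulation is routine.
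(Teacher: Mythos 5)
Your proposal is correct and follows essentially the same route as the paper's proof: decompose each clause of the $F$-refutation by index, represent it by the disjunction of the corresponding extension variables, unfold the unit axioms $y_D$ into that representation, and simulate each resolution step via the extract--cancel--fuse pattern (peel off $y_x$ and $y_{\neg x}$ using the extension axioms at the pivot index, resolve through $\neg y_x \lor \neg y_{\neg x}$, then merge the at most $w$ shared indices one at a time with axioms for $y_{A_i \lor B_i}$). The paper packages the constant-size local derivations you describe into a short preliminary claim (its Claim 15) and applies it $O(k)$ or $O(w)$ times per step, but the underlying construction, width analysis, and $O(w)$-per-step depth accounting are the same as yours.
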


\begin{restatable}{proposition}{PropositionIndexWidthRev}
  \label{prop:index-width-rev}
  If~$F'$ admits a width-$w$ depth-$d$ resolution refutation, then~$F$
  admits an index-width-$O(w)$ depth-$O(d \log |\Sigma|)$ resolution
  refutation.
\end{restatable}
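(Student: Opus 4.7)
The plan is to define a translation $\tau$ that sends each $y$-clause $C$ appearing in the given refutation $\Pi'$ of $F'$ to a set $\tau(C)$ of $x$-clauses over $F$'s variables, and then simulate $\Pi'$ line by line to obtain the required refutation of $F$. Writing an arbitrary $y$-clause as
\[
C \;=\; y_{D_1} \lor \cdots \lor y_{D_p} \lor \neg y_{E_1} \lor \cdots \lor \neg y_{E_q}
\]
with each $D_i, E_j$ an $F$-clause of index-width $\leq k = O(1)$, I would define
\[
\tau(C) \;\coloneqq\; \Set{D_1 \lor \cdots \lor D_p \lor \neg l_1 \lor \cdots \lor \neg l_q \;:\; l_j \in E_j \text{ for every } j} \eqcomma
\]
which amounts to reading $y_D$ as ``$D$ is satisfied'': positive literals $y_{D_i}$ are replaced by the whole clause $D_i$, while each negative $\neg y_{E_j}$ is expanded by picking one literal of $E_j$ to negate and taking all such choices. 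Each clause in $\tau(C)$ then has index-width at most $pk + q \leq kw = O(w)$ whenever $C$ has width~$w$, which will give the desired index-width bound on every line of the simulated proof.

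Next I would check that axioms of $F'$ translate well. An $F$-axiom of the form $y_D$ with $D \in F$ gives $\tau(y_D) = \{D\}$, already an axiom of $F$. Every extension axiom turns into a set of tautologies, so the real task is to analyse each of the four shapes of extension axioms and verify that whenever $\Pi'$ resolves such an axiom against another clause, the translation of the resolvent can be produced from the translation of the non-axiom parent either for free (same set), by restricting to a subset, or, in the remaining cases, by a short weakening that appends at most $k$ literals. The depth cost of this bookkeeping is at most $O(\log |\Sigma|)$ per step.

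The main substantive simulation is a resolution $(A \lor y_D, B \lor \neg y_D) \vdash A \lor B$ between two ordinary $y$-clauses. Fix any target clause $\alpha \lor \beta \in \tau(A \lor B)$. From $\tau(A \lor y_D)$ I already possess $\alpha \lor D$, and from $\tau(B \lor \neg y_D)$ I possess the $|D|$ clauses $\beta \lor \neg l$ for $l \in D$. Chain-resolving $\alpha \lor D$ against these $|D|$ clauses to cancel the literals of $D$ one at a time produces $\alpha \lor \beta$ in an additional depth of $|D| \leq k \log |\Sigma| = O(\log |\Sigma|)$, and all intermediate clauses have the form $\alpha \lor \beta \lor D''$ with $D'' \subseteq D$, so their index-width stays $O(w)$. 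Iterating the simulation over all $d$ steps of $\Pi'$ yields an $F$-refutation of index-width $O(w)$ and depth $O(d \log |\Sigma|)$ terminating in $\tau(\bot) = \{\bot\}$.

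I expect the only slightly delicate point to be the extension-axiom bookkeeping in the second paragraph: naively the translation produces tautological clauses that pure resolution cannot derive, and one has to argue, case by case, that the translation at the non-axiom parent already contains (possibly after a cheap weakening by at most $k$ literals) every clause needed downstream. Everything else is a direct line-by-line simulation, and the stated index-width and depth bounds follow from the per-step accounting described above.
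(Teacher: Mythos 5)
Your approach is correct and takes a genuinely different route from the paper's. The paper's proof argues in the top-down language of prover--adversary games: a partial assignment $\rho \in (\Sigma \cup \{*\})^n$ to $F$ induces a partial assignment $\mu_\rho$ to the extension variables via $\mu_\rho(y_D) = D|_\rho$, and the prover walks a root-to-leaf path in the given refutation of $F'$, at each step querying the $O(1)$ symbols underlying the extension variable being resolved and descending into the falsified premise. The crucial (and very short) observation making this work is that $\mu_\rho$ can never falsify an extension axiom of $F'$ --- for instance $D|_\rho = A|_\rho \lor B|_\rho$ whenever $D = A \lor B$ --- so the walk must terminate at an $F$-axiom $y_D$, and falsifying $y_D$ under $\mu_\rho$ means exactly falsifying $D$ under $\rho$. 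Your bottom-up clause-level translation $\tau$ yields the same bounds, but you pay for forgoing the game abstraction: as you rightly anticipate, you must check each shape of extension axiom to see that resolving it amounts to a no-op, a subset, or a weakening at the $\tau$-level, and you need weakening steps (or a subsumption invariant, keeping only subclauses of $\tau$-images throughout) to handle the resulting tautologies and superclauses. That is all doable, and your simulation of a genuine resolution on $y_D$ --- chain-resolving $\alpha \lor D$ against $\beta \lor \neg l$ for each $l \in D$, costing depth $|D| = O(\log|\Sigma|)$ while keeping every intermediate clause at index-width $O(w)$ --- is exactly right. One small slip: a weakening by a clause $B$ of index-width $\leq k$ appends up to $k \log|\Sigma|$ Boolean literals, not $k$; this does not affect the stated bounds.
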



Since the constraints \ref{a1}--\ref{a4} are all of constant arity the
above translations apply. We denote the resulting $O(\log n)$-CNF
formula by~$\Bracket$ and the resulting $3$-CNF formula
by~$\ExBracket$. Note that by
\cref{prop:index-width,prop:index-width-rev} it suffices to prove
\cref{thm:width-ub,thm:depth-lb} for the more convenient
formula~$\Bracket$.

\section{Upper Bound}
\label{sec:upper-bound}
In this section, we prove the following resolution width upper bound for bracket formulas.
\TheoremUB*

It suffices to describe a refutation of $\Bracket$ of small index-width because of
\cref{prop:index-width}. That is, we think of the formula $\Bracket$ as an unsatisfiable system of
axioms over strings in~$\Sigma^n$. To describe the refutation, we use the standard top-down language of
\emph{prover--adversary games}~\cite{Pudlak00,Atserias08}.

\subsection{Prover--adversary games} \label{sec:pa-games}

The prover--adversary game for $\Bracket$ is played between two competing players, prover and adversary, and
proceeds in rounds. The state of the game in each round is modeled as a partial assignment $\rho \in
(\Sigma\cup\{*\})^n$. At the start of the game, $\rho\coloneqq *^n$. In each round:
\begin{itemize}[label = $-$, noitemsep]
    \item \emph{Query a symbol:} Prover specifies an index $i \in [n]$ and the adversary responds with a
        symbol $\sigma \in \Sigma$. The state $\rho$ is updated by $\rho_i \leftarrow \sigma$.
    \item \emph{Forget symbols:} Prover specifies a subset $I \subseteq [n]$ and we update
        $\rho_i \leftarrow *$ for all $i \in I$.
\end{itemize}
An important detail is that the adversary is allowed to choose $\sigma \in \Sigma$ afresh even if the
$i$-th symbol was queried and subsequently forgotten during past play. The game ends when $\rho$
falsifies some axiom of $F$. The prover's goal is to end the game while keeping the memory size of $\rho$
(number of non-$*$ entries) at most $w$. The least memory size $w$ so that prover has a winning strategy
characterizes the index-width of $\Bracket$~\cite{Pudlak00,Atserias08}.

\subsection{Prover strategy} \label{sec:prover-strategy}

During our prover strategy, we maintain the invariant that our state $\rho$ always contains whole bracket pairs. That is, whenever we query a symbol, we follow its pointer and query its mate. Similarly, if we forget a symbol, we also forget its mate. We tacitly assume that the adversary responds each query without falsifying the first three axioms \ref{a1}--\ref{a3} (otherwise we win immediately). Our goal as the prover then becomes to find the substring $\rred\lblue$.

At the heart of our prover strategy is a recursive procedure that takes as input a state $\rho$ that contains a red bracket pair~$\red R$ and a blue bracket pair $\blue B$ that lies to the right of $\red R$. The procedure outputs a state that has found either:
\begin{itemize}[label=$-$,noitemsep]
    \item a blue bracket pair enclosing $\red R$;
    \item a red bracket pair enclosing~$\blue B$; or
    \item the procedure outright discovers the substring $\rred\lblue$ that falsifies \ref{a4}.
\end{itemize}
That is, in pictures:
\vspace{-0.2em}
\begin{align*}
&
\hspace{3.75em} \red R
\hspace{17.66em} \blue B
\\[-.6em]
\text{Input}~\rho\,=&
\hspace{.93em}
~\texttt{\gray
\_\_\_\_{\red [}\_{\red ]}\_\_\_\_\_\_\_\_\_\_\_\_\_\_\_\_\_\_\_\_\_\_\_\_\_\_\_\_\_\_\_\_{\blue [}\_\_{\blue ]}\_\_\_
} \\[.3em]
\text{Output}\,\in\,&
\begin{cases}
~\texttt{\gray
\_\_{\blue [}\_{\red [}\_{\red ]}\_\_{\blue ]}\_\_\_\_\_\_\_\_\_\_\_\_\_\_\_\_\_\_\_\_\_\_\_\_\_\_\_\_\_{\blue [}\_\_{\blue ]}\_\_\_
} \\
~\texttt{\gray
\_\_\_\_{\red [}\_{\red ]}\_\_\_\_\_\_\_\_\_\_\_\_\_\_\_\_\_\_\_\_\_\_\_\_\_\_\_\_\_\_{\red [}\_{\blue [}\_\_{\blue ]}{\red ]}\_\_
}
\end{cases}
\end{align*}
In particular, if we invoke the procedure on the initial red/blue pair of brackets guaranteed by axiom \ref{a1}, the procedure must find the substring $\rred\lblue$. The procedure is:
\begin{enumerate}
\item If the pairs $\red R$ and $\blue B$ are touching ($\lred\cdots\rred\lblue\cdots\rblue$), then we have found our substring $\rred\lblue$.
\item Otherwise we query the middle index in the interval between $\red R$ and $\blue B$. Say the adversary responds with a blue bracket pair $\blue B'$. If $\blue B'$ encloses $\red R$, we are done.
\item Otherwise we recurse on $\red R$ and $\blue B'$.
\begin{align*}
&
\hspace{3.75em} \red R
\hspace{7.75em} \blue B'
\hspace{8.8em} \blue B
\\[-.6em]
&
\hspace{.93em}
~\texttt{\gray
\_\_\_\_{\red [}\_{\red ]}\_\_\_\_\_\_\_\_\_\_\_\_\_\_{\blue []}\_\_\_\_\_\_\_\_\_\_\_\_\_\_\_\_{\blue [}\_\_{\blue ]}\_\_\_
}\\[-1.2em]
&\hspace{3.3em}\underbrace{\hspace{9.6em}}_{\text{recurse}}
\end{align*}
\item If this recursive call returns a blue bracket pair enclosing $\red R$, then we are done. Otherwise we have found a red middle pair $\red R'$ whose \emph{area} (number of indices enclosed by the pair) is larger than the area of $\blue B'$. We now forget $\blue B'$ and recurse on $\red R'$ and $\blue B$.
\begin{align*}
&
\hspace{3.75em} \red R
\hspace{7.75em} \blue B'
\hspace{0.7em} \red R'
\hspace{7.1em} \blue B
\\[-.6em]
&
\hspace{.93em}
~\texttt{\gray
\_\_\_\_{\red [}\_{\red ]}\_\_\_\_\_\_\_\_\_\_\_\_{\red [}\_{\blue []}\_\_{\red ]}\_\_\_\_\_\_\_\_\_\_\_\_\_{\blue [}\_\_{\blue ]}\_\_\_
}\\
&
\hspace{.93em}
~\texttt{\gray
\_\_\_\_{\red [}\_{\red ]}\_\_\_\_\_\_\_\_\_\_\_\_{\red [}\_\_\_\_\_{\red ]}\_\_\_\_\_\_\_\_\_\_\_\_\_{\blue [}\_\_{\blue ]}\_\_\_
}\\[-1.2em]
&\hspace{11.1em}\underbrace{\hspace{12.2em}}_{\text{recurse}}
\end{align*}
\item By continuing this way, we find a sequence of middle pairs whose area keeps increasing every step. Eventually, a recursive call must return a bracket enclosing $\red R$ or~$\blue B$ that is of the opposite colour. (It is possible that the middle pair, say $\red R'$, encloses the same-coloured input pair $\red R$ at some point. This only means that the next recursive call on $\red R'$ and $\blue B$ will return a desired oppositely-coloured output pair.) 
\vspace{-0.2em}
\begin{align*}
&
\hspace{7.5em} \overbrace{\hspace{11.6em}}^{\text{area keeps increasing}}
\\[-1.5em]
&
\hspace{3.75em} \red R
\hspace{17.66em} \blue B
\\[-.6em]
&
\hspace{.93em}
~\texttt{\gray
\_\_\_\_{\red [}\_{\red ]}\_\_\_\_\_{\blue [}\_\_\_\_\_\_\_\_\_\_\_\_\_\_\_\_\_\_\_\_\_{\blue ]}\_\_\_\_{\blue [}\_\_{\blue ]}\_\_\_
}\\[-.2em]
&\hspace{12em}\vdots
\\[-.5em]
\text{Output:}&
\hspace{.93em}
~\texttt{\gray
\_\_{\blue [}\_{\red [}\_{\red ]}\_\_{\blue ]}\_\_\_\_\_\_\_\_\_\_\_\_\_\_\_\_\_\_\_\_\_\_\_\_\_\_\_\_\_{\blue [}\_\_{\blue ]}\_\_\_
}
\end{align*}
\end{enumerate}
The recursive depth of this strategy is $O(\log n)$ and we keep at most $4$ bracket pairs in memory in each level of recursion. It follows that the total memory usage is $O(\log n)$ symbols. This shows that~$\Bracket$ has a refutation of index-width $O(\log n)$, which concludes the proof of \cref{thm:width-ub}.

\section{Lower Bound}
\label{sec:lower-bound}

In this section we prove the following depth lower bound for the
bracket formulas.

\TheoremLB*

Note that by \cref{prop:index-width-rev} it suffices to prove that any
index-width-$w$ resolution refutation of~$\Bracket$ has
depth~$n^{\Omega(\log n / \log w)}$. We prove this by exhibiting an
adversary strategy for the prover--adversary game such that if the
game state~$\rho \in (\Sigma \cup \set{*})^n$ is limited to memory
size~$w$, then the prover cannot win the prover--adversary game in
less than~$n^{\Omega(\log n / \log w)}$ rounds. Since the minimum
number of rounds before the prover can win corresponds to the minimum
depth of a resolution refutation of
width~$w$~\cite{Pudlak00,Atserias08} we may conclude
\cref{thm:depth-lb}.

The high-level objective of the adversary is to maintain the property
that differently coloured top-level brackets of~$\rho$ are far
apart. This is achieved by a strategy that forces the prover to
essentially follow the upper bound strategy of \cref{thm:width-ub}.
Throughout the game the adversary maintains a ``cleaned-up'' view of
the game state~$\rho$. We formalize this notion as a \emph{container}
of~$\rho$ in \cref{sec:container}. This section is followed by
\cref{sec:adversary-strategy} devoted to the adversary
strategy. Finally in \cref{sec:buffer,sec:proof-cover} we prove that
our adversary strategy indeed guarantees that the prover cannot win
the game in the first~$n^{\Omega(\log n / \log w)}$ rounds.

\subsection{Containers}
\label{sec:container}

The game state~$\rho \in (\Sigma \cup \set{*})^n$ is a somewhat unruly
partial assignment. It may contain brackets that are not paired or
bracket pairs that are redundant: if~$\rho$ assigns consecutive
indices to, say,~$\lred \tblue \rred$, then the blue trivial bracket
pair~$\tblue$ is of no use to the prover. In order to concisely state
the adversary strategy we want a ``cleaned up'' view of~$\rho$.

To this end we introduce the notion of a \emph{container}
of~$\rho$. Informally a container of~$\rho$ is a partial assignment
that (1) consists of bracket pairs and (2) is minimal while
guaranteeing that~$\rho$ does not falsify an axiom of the bracket
formula.
Thus, since a container ensures that~$\rho$ does not falsify an axiom,
the task of the adversary is reduced to maintaining a container
of~$\rho$ for the initial~$n^{\Omega(\log n / \log w)}$
rounds. In \cref{sec:adversary-strategy} we exhibit such a
strategy. In the remainder of this section we formalize the notion of
a container.

We say that a partial assignment is consistent if it falsifies no
axiom of the system \ref{a1}--\ref{a4}. Recall that for a
symbol~$\sigma \in \Sigma$ we write~$\sigma = (b(\sigma),p(\sigma))$
where~$b(\sigma) \in \set{\tred,\tblue,\lred, \lblue, \rred, \rblue}$
is the bracket type and~$p(\sigma) \in [n]$ is the pointer.

\begin{definition}[Configuration]
  A \emph{configuration~$\calC \in (\Sigma \cup \set{*})^n$} is a
  consistent partial assignment consisting of bracket pairs, that is,
  if~$\calC$ assigns index~$i$ and~$p(\calC_i) = j$, then~$\calC$ assigns
  index~$j$.
\end{definition}

For example the partial assignment~$\gray\lred\lblue\rblue\_\_\tblue$
is \emph{not} a configuration: it is consistent (assuming the pointers
are correctly chosen) but the bracket~$\lred$ is unpaired.

Note that a configuration~$\calC$ can be thought of as a set of
bracket pairs. In the following we thus interchangeably
identify~$\calC$ as a set of bracket pairs~$\calC = \set{P_i}_i$ and
as a consistent partial
assignment~$\calC \in (\Sigma \cup \set{*})^n$.

The \emph{area} of a bracket pair~$P$, denoted by $\area(P)$, is the
set of indices enclosed by~$P$ and we say that an index~$i\in[n]$
is \emph{covered} by~$P$ if~$i \in \area(P)$. These notions are
extended in the natural way to a configuration~$\calC$: the \emph{area
  of~$\calC$} consists of all indices covered by its bracket
pairs~$\area(\calC) = \bigcup_{P \in \calC} \area(P)$ and an index~$i$ is
\emph{covered} by~$\calC$ if~$i \in \area(\calC)$.

A configuration~$\calC$ is said to be \emph{closed} if it assigns all
covered indices and it is \emph{locally consistent} if there is a
closed configuration~$\calC'$ that extends~$\calC \subseteq \calC'$. For example,
the configuration \vspace{-0.2em}
\begin{align*}
  \texttt{
  \gray
  \_\_\_\lred\lblue\_\lred\_\_\_\_\rred\_\_\lblue\_\_\rblue\rblue\rred\_\_\_\_\_\_\_\_\_\_\_\_\_\_\_\_\_\_\_\_\_\lblue\_\_\rblue\_\_
  }
  \hspace{.75em}
\end{align*}
is locally consistent as witnessed by the closed configuration
\vspace{-0.2em}
\begin{align}
  \label{eq:closed-config}
  \texttt{
  \gray
  \_\_\_\lred\lblue\lblue\lred\lred\lred\rred\rred\rred\rblue\hspace{-0.16em}\tblue\hspace{-0.16em}\lblue\lblue\rblue\rblue\rblue\rred\_\_\_\_\_\_\_\_\_\_\_\_\_\_\_\_\_\_\_\_\_\lblue\lblue\rblue\rblue\_\_
  } \eqperiod
\end{align}
By contrast, the following configuration is \emph{not} locally consistent:
\vspace{-0.2em}
\begin{align*}
  \texttt{
  \gray
  \_\_\_\lred\lblue\lred\_\_\_\_\_\rred\_\_\lblue\_\_\rblue\rblue\rred\_\_\_\_\_\_\_\_\_\_\_\_\_\_\_\_\_\_\_\_\_\lblue\_\_\rblue\_\_
  } \eqperiod
\end{align*}
(There is no consistent assignment for the two indices to the right of the left-most~$\rred$.)

The configuration~$\calC$ maintained by the adversary is a locally
consistent configuration that contains \emph{top-level bracket pairs}
only: a bracket pair~$P \in \calC$ is top-level if for all other
pairs~$P' \in \calC$ it holds that $\area(P) \not\subseteq \area(P')$. For
example, the configuration
\vspace{-0.2em}
\begin{align*}
  \texttt{
  \gray
  \_\lred\_\lblue\lred\_\rred\_\rblue\_\_\_\_\_\_\rred\_\lred\_\rred\_\lred\_\_\rred\_\_\_\lblue\_\lred\_\rred\_\rblue\_\_\lblue\_\_\_\_\_\_\_\rblue\_
  }
  \hspace{.75em}
\end{align*}
contains the top-level bracket pairs
\vspace{-0.2em}
\begin{align*}
  \texttt{
  \gray
  \_\lred\_\_\_\_\_\_\_\_\_\_\_\_\_\rred\_\lred\_\rred\_\lred\_\_\rred\_\_\_\lblue\_\_\_\_\_\rblue\_\_\lblue\_\_\_\_\_\_\_\rblue\_
  } \eqperiod
\end{align*}

A configuration~$\calC$ is \emph{monotone} if the red top-level bracket
pairs of~$\calC$ are before some index~$i$ while the blue top-level
bracket pairs come after~$i$. The \emph{separating interval} of such a
monotone configuration~$\calC$ is the (unique) maximal
interval~$I \subseteq [n]$ that contains~$i \in I$ and satisfies that
no index in~$I$ is covered by~$\calC$. The above example configuration
is monotone and has a separation interval of size~$3$. 

The locally consistent configuration~$\calC$ maintained by the adversary
is furthermore monotone and related to the game
state~$\rho \in (\Sigma \cup \set{*})^n$ as follows. Let the
\emph{support} of a partial assignment~$\tau$, denoted
by~$\supp(\tau)$, be the set of indices assigned by~$\tau$, that is,
$\supp(\tau) = \set{i \in [n]: \tau_i \neq *}$.

\begin{definition}[Domination]
  \label{def:domination}
  A configuration~$\calC$ \emph{dominates} a partial assignment~$\tau$ if
  the support of~$\tau$ is covered by~$\calC$
  (\ie $\supp(\tau) \subseteq \area(\calC)$) and there is a closed
  configuration~$\calC'$ that extends~$\calC\subseteq \calC'$ as well
  as~$\tau\subseteq \calC'$.
\end{definition}

For example the partial assignment
\vspace{-0.2em}
\begin{align*}
  \texttt{
  \gray
  \_\_\_\_\_\_\lred\_\_\_\_\_\_\_\_\_\rblue\_\_\_\_\_\_\_\_\_\_\_\_\_\_\_\_\_\_\_\_\_\_\_\_\lblue\_\_\_\_\_
  }
  \hspace{.75em}
\end{align*}
is dominated, assuming the pointers are appropriately chosen, by
\vspace{-0.2em}
\begin{align*}
  \texttt{
  \gray
  \_\_\_\lred\_\_\_\_\_\_\_\_\_\_\_\_\_\_\_\rred\_\_\_\_\_\_\_\_\_\_\_\_\_\_\_\_\_\_\_\_\_\lblue\_\_\rblue\_\_
  }
  \hspace{.75em}
\end{align*}
as witnessed by \eqref{eq:closed-config}.

To summarize the above discussion the adversary maintains a monotone
and locally consistent configuration~$\calC$ that dominates the game
state~$\rho$. Furthermore, the configuration~$\calC$ consists of top-level
bracket pairs only. We impose in fact a slightly stronger condition
than only containing top-level pairs as summarized in the next
definition.

\begin{definition}[Container]
  \label{def:container}
  A \emph{container} of a partial assignment~$\tau$ is a locally
  consistent and monotone configuration~$\calC$ such that
  \begin{enumerate}[label=(C\arabic*),noitemsep]
  \item $\calC$ dominates~$\tau$, and
    \label[condition]{def:container-domination}
  \item $\calC$ is minimal while dominating~$\tau$: removing any bracket
    pair from~$\calC$ results in a configuration that does not
    dominate~$\tau$.
    \label[condition]{def:container-minimal}
  \end{enumerate}
\end{definition}

Observe that \ref{def:container-minimal} implies that containers consist of top-level bracket pairs
only. It further allows us to relate the size
of a container to the size of the support of~$\tau$ as follows.

\begin{proposition}
  \label{prop:domination-local-consistency}
  \label{prop:container-size}
  Consider a partial assignment~$\tau$ and a configuration~$\calC$.
  If~$\calC$ is a container of~$\tau$, then~$\tau$ is consistent and,
  furthermore,~$\calC$ contains at most~$\lvert\supp(\tau)\rvert$ many
  bracket pairs.
\end{proposition}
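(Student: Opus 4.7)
The plan is to prove the two claims in sequence, both by leveraging the closed configuration $\calC'$ guaranteed by \ref{def:container-domination} and invoking the minimality condition \ref{def:container-minimal}.

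\emph{Consistency of $\tau$.} First, I would use the domination condition to extract a closed configuration $\calC'$ extending both $\calC$ and $\tau$. Since $\calC'$ is a configuration, it is consistent by definition. Each axiom in \ref{a1}--\ref{a4} depends on the symbols at a constant number of indices, and $\tau$ agrees with $\calC'$ on every index in $\supp(\tau)$. Hence if $\tau$ falsified some axiom, so would $\calC'$, contradicting the consistency of~$\calC'$.

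\emph{Size bound.} I plan to proceed in two steps. First, I would observe that $\calC$ contains only top-level bracket pairs: if some $P \in \calC$ were nested inside another $P' \in \calC$, then $\area(\calC \setminus \{P\}) = \area(\calC)$, and the same $\calC'$ still witnesses that $\calC \setminus \{P\}$ dominates $\tau$, contradicting \ref{def:container-minimal}. Because axiom \ref{a3} forbids crossings, two pairs in a configuration are either disjoint or nested, so the top-level pairs of $\calC$ must have pairwise disjoint areas. Second, I would argue that every $P \in \calC$ satisfies $\area(P) \cap \supp(\tau) \neq \emptyset$: otherwise the disjointness just established gives $\supp(\tau) \subseteq \area(\calC) \setminus \area(P) = \area(\calC \setminus \{P\})$, and once again $\calC'$ witnesses that $\calC \setminus \{P\}$ still dominates $\tau$, violating minimality. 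Picking one index $i_P \in \area(P) \cap \supp(\tau)$ per bracket pair yields distinct elements of $\supp(\tau)$ (distinct by disjointness), one per element of $\calC$, so $\lvert\calC\rvert \leq \lvert\supp(\tau)\rvert$.

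There is no real obstacle here; the proof is structural and follows directly from the definition of container together with the non-crossing property of configurations. The only small point to verify is that whenever we delete a pair $P$ from $\calC$, the resulting object $\calC \setminus \{P\}$ is again a monotone, locally consistent configuration that can be compared to $\tau$ under the domination relation, but all three properties are inherited automatically by reusing the same closed witness $\calC'$.
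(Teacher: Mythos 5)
Your proposal is correct and takes essentially the same approach as the paper: extract the closed witness $\calC'$ from \ref{def:container-domination} to get consistency of $\tau$, and invoke the minimality condition \ref{def:container-minimal} for the size bound. The paper's proof is terser, dispatching the size bound in one sentence, whereas you spell out the (correct) details—top-level pairs are pairwise disjoint by the non-crossing axiom, and each must intersect $\supp(\tau)$ on pain of violating minimality.
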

\begin{proof}
  By \ref{def:container-domination} we have that~$\calC$
  dominates~$\tau$ and hence by \cref{def:domination} there is a
  configuration~$\calC'$ that extends~$\tau \subseteq \calC'$. Since
  configurations are consistent this implies that~$\tau$
  is consistent. The claim about the number of bracket pairs in~$\calC$
  follows from the minimality property \ref{def:container-minimal}.
\end{proof}

By \cref{prop:domination-local-consistency} it holds that while the
adversary maintains a container of the game state~$\rho$ the prover
cannot win the prover--adversary game. The following section exhibits
an adversary that maintains a container of~$\rho$ for the
initial~$n^{\Omega(\log n / \log w)}$ rounds.

\subsection{The Adversary Strategy}
\label{sec:adversary-strategy}

This section is devoted to the adversary strategy. Throughout we
assume that the memory size of the game state~$\rho$ is bounded
by~$w$.

The main objective of the adversary is to maintain a container~$\calC$
of~$\rho$ for the initial~$n^{\Omega(\log n/\log w)}$ rounds
of the prover--adversary game. By
\cref{prop:domination-local-consistency} it thus follows that the
prover cannot win the game in the
first~$n^{\Omega(\log n/\log w)}$ rounds. \Cref{thm:depth-lb}
follows from the fact that the minimum number of rounds required for
the prover to win the game corresponds to the minimum depth of a
width-$w$ resolution refutation.

In addition to the container~$\calC$ of~$\rho$ the adversary also maintains
an interval~$I\subseteq [n]$. Initially~$I = [n]$ and we should think
of~$I$ as the separation interval of~$\calC$. Let us stress, though, that
for technical convenience the actual strategy only guarantees that~$I$
is a sub-interval of the separation interval.

Let us describe the adversary strategy. There are two integer
parameters~$\ell_0 = \floor{\eps \frac{\log n}{\log w}}$
and~$d = \lfloor n^\eps
\rfloor$. 
The adversary repeats the following~$d$ times.
\begin{enumerate}
\item Identify a large interval~$I$ with no index covered by the
  container~$\calC$.
  \label[step]{step:identify}
\item Move the separation interval to~$I$ by modifying~$\calC$.
  \label[step]{step:move}
\item Recursively invoke the adversary on the interval~$I$.
\end{enumerate}
Once the adversary has reached recursion level~$\ell_0$ they are ready
to play a round of the prover--adversary game. Suppose that the prover
queries some index~$i$. If index~$i$ is covered by the
container~$\calC$, then the adversary answers according to the complete
configuration~$\calC' \supseteq \calC \cup \rho$ as guaranteed to exist by
\ref{def:container-domination}. Otherwise, if~$i$ is not covered by~$\calC$,
for~$[a,b] = I$ the adversary answers with a trivial bracket coloured
red if~$i < \frac{a+b}{2}$ and coloured blue otherwise.

This essentially completes the description of the adversary strategy
modulo the ``move'' operation performed in \cref{step:move}. This is
best explained by a picture. Say the current container is
\vspace{-0.2em}
\begin{align*}
  \calC&=
  \texttt{
  \gray
  \_\lred\_\_\_\_\_\rred\_\_\_\_\_\_\_\lred\_\rred\_\lred\_\_\_\rred\_\_\_\_\lred\_\_\_\_\_\rred\_\_\lblue\_\_\_\_\_\_\_\rblue\_
     } \eqcomma
  \\[-1.2em]
&\hspace{5.8em}\underbrace{\hspace{3.75em}}_{I}
\end{align*}
where~$I$ is the interval as identified in \cref{step:identify}. The
move operation replaces any red (blue) bracket pair to the right (to
the left) of~$I$ by a blue (red) bracket pair that minimally covers
it. If any of these newly added pairs overlap, then they are merged
into a single bracket pair.

For the above example container~$\calC$ the move operation results in the new
container
\vspace{-0.2em}
\begin{align*}
  \calC'&=
  \texttt{
  \gray
  \_\lred\_\_\_\_\_\rred\_\_\_\_\_\_\lblue\_\_\_\_\_\_\_\_\_\rblue\_\_\lblue\_\_\_\_\_\_\_\rblue\_\lblue\_\_\_\_\_\_\_\rblue\_
  } \eqperiod
\end{align*}
Observe that if~$\calC$ is a container of~$\rho$, then by construction~$\calC'$
is also a container of~$\rho$. Further the area of~$\calC'$ is quite
closely related to the area of~$\calC$: according
\cref{prop:container-size} any container of~$\rho$ contains at most~$w$
bracket pairs and hence~$\calC'$ may only cover an additional~$2w$
indices. Finally note that the~$\textsc{Move}$ operation is the only
source of non-trivial bracket pairs in the container.

This completes the description of the adversary strategy. For a more
thorough treatment we refer to \cref{alg:adversary}. The strategy is
initially invoked by~$\textsc{Adversary}(\emptyset,[n],0)$.

\begin{algorithm}[p]
  \caption{The adversary strategy.}
  \label{alg:adversary}
  \begin{algorithmic}[1]
    \Procedure{Adversary}{$\calC, I,\ell$}
    \If{$\lvert I \rvert < n/(4w)^{\ell}$}
    \State \textbf{return} {\red failure}
    \EndIf
    \Statex
    \If{$\ell = \ell_0$}
    \State $\tilde \calC_d \gets$ \Call{PlayRound}{$\calC, I$}  
    \Else
    \State $\tilde \calC_0 \gets \calC$
    \For{$i= 0, 1, \ldots, d - 1$}
      \State $\tilde I_{\indexBeforeCall} \gets$ a largest interval in~$I$ with no
      index covered by~$\tilde \calC_{\indexBeforeCall}$
      \State $\calC_{\indexBeforeCall} \gets $\Call{Move}{$\tilde \calC_{\indexBeforeCall}, \tilde I_{\indexBeforeCall}$}
      \State $I_{\indexBeforeCall} \gets $ maximum interval in~$\tilde I_{\indexBeforeCall}$ with no
      index covered by~$\calC_{\indexBeforeCall}$
      \State $\tilde \calC_{\indexAfterCall} \gets $\Call{Adversary}{$\calC_{\indexBeforeCall}, I_{\indexBeforeCall}, \ell+1$}
    \EndFor
    \EndIf
    \State \textbf{return} $\tilde \calC_d$
    \EndProcedure
  \end{algorithmic}
\end{algorithm}

\begin{algorithm}[p]
  \caption{Plays a round of the prover--adversary game.}
  \label{alg:play}
  \begin{algorithmic}[1]
    \Procedure{PlayRound}{$\calC, I$}
    \If{prover forgets}
    \State $\calC \gets$ minimum sub-configuration of~$\calC$ that dominates~$\rho$
    \Else
    \State $i \gets$ index queried by the prover
    \If{$i \in \area(\calC)$}
    \State $\calC' \gets$ closed configuration~$\calC' \supseteq \calC \cup \rho$ as
    guaranteed by \cref{def:domination}
    \State $\rho_i \gets \calC'_i$
    \Else
    \State $[a,b] \gets I$
    \If{$i < \frac{a+b}{2}$}
    \State $\rho_i \gets \tred$; $\calC_i \gets \tred$
    \Else
    \State $\rho_i \gets \tblue$; $\calC_i \gets \tblue$
    \EndIf
    \EndIf
    \EndIf
    \State \textbf{return} $\calC$
    \EndProcedure
  \end{algorithmic}
\end{algorithm}

\begin{algorithm}[p]
  \caption{Returns a configuration~$\calC'$ such that the separation
    interval intersects~$I$.}
  \label{alg:move}
  \begin{algorithmic}[1]
    \Procedure{Move}{$\calC, I$}
    \State $\calP \gets$ red (blue) bracket pairs to the right (left)
    of~$I$
    \State $\calP' \gets \emptyset$
    \For{$P \in \calP$}
    \State $P' \gets$ minimum bracket pair that strictly contains~$P$ of opposite colour
    \State $\calP' \gets \calP' \cup \set{P'}$
    \EndFor
    \While{$\exists P,P'\in\calP'$ such that $\area(P) \cap \area(P') \neq
      \emptyset$}
    \State $P'' \gets$ bracket pair such that~$\area(P'') = \area(P)
    \cup \area(P')$ of equal colour
    \State $\calP' \gets \calP'$ with $P,P'$ replaced by $P''$
    \EndWhile
    \State \textbf{return} $(\calC \setminus \calP) \cup \calP'$
    \EndProcedure
  \end{algorithmic}
\end{algorithm}

Let us remark that the game--state~$\rho$ is treated rather
implicitly; we do not keep explicit track of~$\rho$. It is understood,
however, that we are playing a prover--adversary game and in each
round played~$\rho$ changes. This implies in particular that the game
state when an adversary returns is (most likely) distinct from the
game state when the adversary was instantiated.

Finally observe that the adversary changes the maintained~$\calC$ in
two places only: in \textsc{PlayRound} when playing a round of the
game and in \textsc{Move} when moving the separation
interval. Hence~$\calC$ may fail to be a container of~$\rho$ in only
these two places. The following claim shows that if the adversary does
not fail, then~$\calC$ is indeed a container of~$\rho$.

\begin{lemma}
  \label{lem:fail}
  Unless the adversary {\red fails} it holds that the maintained
  configuration~$\calC$ is a container of the game state.
\end{lemma}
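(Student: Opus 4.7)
The plan is to prove the invariant by induction on the sequence of modifications made to $\calC$. Initially $\calC = \emptyset$ and $\rho$ is the empty partial assignment, so all four container properties from \cref{def:container}---domination, local consistency, monotonicity, and minimality---hold vacuously. Since $\calC$ is only ever modified inside \textsc{PlayRound} and \textsc{Move}, it suffices to verify that each such modification preserves the container property (assuming the adversary has not failed). At every other point in the code the game state $\rho$ may change but $\calC$ does not, so there is nothing to check.

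For \textsc{PlayRound} I would split into three sub-cases. When the prover forgets, $\rho$ shrinks and $\calC$ is explicitly replaced by its minimum sub-configuration still dominating~$\rho$: domination and minimality then hold by construction, while local consistency and monotonicity are inherited since we only remove bracket pairs from a container. When the prover queries $i \in \area(\calC)$, the answer $\calC'_i$ is read off the closed witness $\calC'$ of \cref{def:domination}, so $\calC$ itself is unchanged and the same $\calC'$ continues to witness domination of the extended $\rho$. When the prover queries $i \notin \area(\calC)$, I would first argue that $i$ must lie in the separation interval of $\calC$ (since by monotonicity any index outside it is covered) and in fact inside the sub-interval $I$ by the adversary's bookkeeping; adding the trivial bracket $\tred$ or $\tblue$ at $i$ then only shrinks the separation interval, and the new $\calC$ stays monotone, minimal, and locally consistent by extending the old closed witness with the same trivial pair.

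The more delicate case is \textsc{Move}. I would verify in turn: (a) every replacement pair $P'$ strictly contains the original pair $P$ of opposite colour, so $\area(\calC) \subseteq \area(\calC')$, and after merging overlapping replacements no pair crosses $I$; (b) all remaining red pairs lie to the left of $I$ and all blue pairs lie to the right, giving monotonicity with a separation interval containing $I$; (c) minimality is preserved because each replacement/merged pair inherits from its constituents the indices of $\rho$ that make it necessary for domination. The main obstacle I anticipate is proving domination together with local consistency after \textsc{Move}: one must exhibit an explicit closed extension of $\calC'$ that is compatible with $\rho$, and the clean way to do this is to start from a closed witness $\calC^\star$ of $\calC$ (which exists by local consistency of $\calC$) and rebuild it by nesting each modified region inside the newly added outer bracket of opposite colour, verifying the axioms \ref{a1}--\ref{a4} along the boundaries. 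Finally, I would observe that the ``unless the adversary fails'' hypothesis, i.e.\ the bound $|I| \geq n/(4w)^\ell$ together with \cref{prop:container-size}, is precisely what guarantees that the enclosing pair $P'$ of opposite colour required by \textsc{Move} always exists, so that \textsc{Move} completes without secretly triggering a failure.
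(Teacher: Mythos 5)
Your overall plan---verify container-preservation after each modification of $\calC$---is a natural one, but the way you set up the induction misses the key ingredient that actually makes the \textsc{Move} step go through.

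The paper inducts on the \emph{recursion depth} $\ell = \ell_0,\ldots,0$, and the inductive hypothesis records more than the bare container property: it explicitly tracks that the returned container's separation interval has size at least~$5$. This is precisely what is needed in the \textsc{Move} step. The pairs $\calP$ being replaced lie strictly between $\tilde I_i$ and the separation interval of $\tilde\calC_i$, and when you replace/merge them, each new pair can reach at most one index farther out on either side. What guarantees that this one-index growth does not collide with the untouched part $\calC\setminus\calP$ (on one side) or swallow all of $\tilde I_i$ (on the other) is exactly that $\lvert\tilde I_i\rvert\ge 10$ (from the no-failure assumption for the upcoming recursion) \emph{and} that the separation interval of $\tilde\calC_i$ has size $\ge 5$ (from the inductive hypothesis). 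Your proposal never makes this size argument. The ``rebuild the closed witness by nesting'' idea you mention is only available once you have already established that the new outer pairs do not overlap the rest of the configuration; without the size bounds, that non-overlap is simply not true, and there need not exist a consistent closed extension. So your step--by--step induction, which does not carry the separation-interval size as part of its invariant, does not have the ammunition needed at the critical step.

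Two smaller points of misdirection. First, you claim the ``unless the adversary fails'' hypothesis is what guarantees the enclosing pair $P'$ in \textsc{Move} \emph{exists}; in fact $P'$ exists essentially for free (it is just the pair widened by one index on each side), and the real role of the failure bound $\lvert I\rvert \ge n/(4w)^\ell$ is to ensure the intervals are large enough that the new pairs do not intersect the rest of the configuration, i.e.\ it is a consistency argument, not an existence argument. Second, in the \textsc{PlayRound} case with $i\notin\area(\calC)$ you assert that $i$ must lie in the separation interval of $\calC$; that is false in general (there can be uncovered indices outside the separation interval, for instance between two top-level red pairs, or before the first red pair). The adversary's rule of answering $\tred$ or $\tblue$ based on the midpoint of $I$ still produces a consistent answer in all cases, but the justification is that any uncovered index to the left of $I$ satisfies $i<a\le(a+b)/2$ and any one to the right satisfies $i>b\ge(a+b)/2$, together with the fact that $\rho$ is supported on $\area(\calC)$ (so $\rho$ assigns nothing in $I$); it is not because $i$ is forced into the separation interval.
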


\begin{proof}
  Consider an adversary invoked
  by~$\textsc{Adversary}(\calC,I,\ell)$. By construction it should be
  evident that the interval~$I$ is a sub-interval of the separation
  interval of~$\calC$. Since the adversary does not fail we may assume
  that~$\lvert I \rvert \geq n/(4w)^{\ell_0} \geq 10$ by our choice
  of~$\ell_0 = \lfloor \eps \frac{\log n}{\log w}\rfloor$.

  By induction on the
  recursion\nobreakdash-depth~$\ell = \ell_0, \ldots, 0$ we prove that
  if~$\calC'=\textsc{Adversary}(\calC,I,\ell)$ is instantiated with a
  container~$\calC$ of the current game state, then the adversary returns
  a container~$\calC'$ of the resulting game state with a separation
  interval of size at least~$5$. Note that once the induction is
  established the statement follows for the initial
  call~$\textsc{Adversary}(\emptyset,[n], 0)$ since the empty
  configuration is a container for the initial game
  state~$\rho = *^n$.

  The base case, that is for an adversary of recursion-depth~$\ell_0$,
  the inductive hypothesis holds by inspection of the procedure
  $\textsc{PlayRound}$ and the assumption that the interval~$I$ is of
  size at least~$10$. Note that the separation interval of the
  container returned is of size at least~$5$.

  We may thus assume the inductive hypothesis for
  depth~$\ell+1$. Consider an adversary of depth~$\ell$. We need to
  argue that \textsc{Move} does not cause the container to become
  inconsistent. This is readily verified: note that the bracket
  pairs~$\calP$ being replaced in
  $\textsc{Move}(\tilde \calC_\indexBeforeCall, \tilde
  I_\indexBeforeCall)$ lie in-between the
  intervals~$\tilde I_\indexBeforeCall$ and the separation interval
  of~$\tilde \calC_\indexBeforeCall$. By induction we may assume that
  the separation interval of~$\tilde \calC_\indexBeforeCall$ is of
  size at least~$5$ and by assumption it holds
  that~$\lvert \tilde I_\indexBeforeCall\rvert \geq 10$ as otherwise
  the adversary would fail in the coming recursion. This implies in
  particular that the bracket pairs~$P' \in \calP'$ do not intersect
  the area of the configuration~$\calC \setminus \calP$. This
  establishes the induction. The statement follows.
\end{proof}

\subsection{Buffers}
\label{sec:buffer}

It remains to establish that every
adversary~$\textsc{Adversary}(\calC, I, \ell)$ is invoked with an
interval~$I$ of size~$\lvert I \rvert \geq n/(4w)^\ell$.
Let~$A_\ell = w(A_{\ell+1} + 3d)$ with~$A_{\ell_0} = 3d$
for~$\ell\in \set{0, 1, \ldots, \ell_0}$.

\begin{lemma}[Cover Lemma]
  \label{lem:cover}
  Consider the adversary~$\textsc{Adversary}(\calC, I, \ell)$ of
  recursion depth~$\ell$, suppose
  that~$\lvert I \rvert \geq n/(4w)^\ell$, and
  let~$\tilde \calC_\indexAfterCall =
  \textsc{Adversary}(\calC_{\indexBeforeCall}, I_{\indexBeforeCall},
  \ell+1)$ as in \cref{alg:adversary}. It holds
  that~$\lvert \area(\tilde \calC_i) \cap I\rvert \leq w(A_{\ell+1} +
  3i)$.
\end{lemma}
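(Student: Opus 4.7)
The plan is to prove \cref{lem:cover} by induction on the iteration index $i$, with a concurrent downward induction on the recursion depth $\ell$ so that the Cover Lemma itself can be invoked for the inner recursive call at depth $\ell+1$.

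For the base case $i = 0$, I would observe that $\tilde \calC_0 = \calC$ and, as established in the proof of \cref{lem:fail}, the interval $I$ lies inside the separation interval of $\calC$. Hence $\area(\tilde \calC_0) \cap I = \emptyset$ and the claimed bound $0 \leq w A_{\ell+1}$ is immediate.

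For the inductive step, two operations produce $\tilde \calC_{i+1}$ from $\tilde \calC_i$: the move $\calC_i = \textsc{Move}(\tilde \calC_i, \tilde I_i)$, and the recursive call $\tilde \calC_{i+1} = \textsc{Adversary}(\calC_i, I_i, \ell+1)$. By \cref{prop:container-size}, $\tilde \calC_i$ contains at most $w$ bracket pairs, so \textsc{Move} replaces at most $w$ pairs by new opposite-colour pairs that minimally contain their predecessors in the witnessing closed configuration; the area this adds inside $I$ should be bounded by a small multiple of $w$. For the recursive call, invoking \cref{lem:cover} inductively at depth $\ell+1$ and iteration index $d$ gives $\lvert \area(\tilde \calC_{i+1}) \cap I_i \rvert \leq w(A_{\ell+2} + 3d) = A_{\ell+1}$. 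A crucial structural observation is that the inner recursion only modifies pairs whose area lies inside $I_i$: the deeper \textsc{Move}s are called with sub-intervals of $I_i$ and, by monotonicity of the container, the pairs of $\calC_i$ outside $I_i$ are never candidates for replacement during deeper recursion. Consequently $\lvert \area(\tilde \calC_{i+1}) \cap (I \setminus I_i) \rvert = \lvert \area(\calC_i) \cap (I \setminus I_i) \rvert$, which together with the move bound and the recursive bound should yield $\lvert \area(\tilde \calC_{i+1}) \cap I \rvert \leq w(A_{\ell+1} + 3(i+1))$.

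The hard part will be the fine bookkeeping required to close the induction with the exact bound claimed. A naive additive combination of the inductive hypothesis, the \textsc{Move} contribution, and the $A_{\ell+1}$ term from the recursion overshoots by roughly an $A_{\ell+1}$ per outer iteration. Resolving this seems to require a strengthened per-pair invariant: each of the at most $w$ bracket pairs in the container has its area-in-$I$ bounded by $A_{\ell+1} + 3i$ individually, so that recursive contributions occupy single ``slots'' rather than accumulating across all $d$ iterations. Formulating this per-pair invariant carefully, and verifying that it is preserved by both \textsc{Move} (where merging may combine several slots into one) and by the recursive call (which replaces pairs inside $I_i$ by the container returned from depth $\ell+1$), is where I expect the technical weight of the proof to lie.
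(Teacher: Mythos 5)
You correctly identify the central obstacle (a naive additive combination of the Move overhead, the recursive $A_{\ell+1}$ contribution, and the inductive hypothesis overshoots by roughly $A_{\ell+1}$ per outer iteration), and your intuition that the fix must be some per-pair ``slot'' accounting is pointing in the right direction. However, the per-pair invariant you propose --- that every bracket pair individually has area-in-$I$ at most $A_{\ell+1}+3i$ --- does not hold. A single pair produced by \textsc{Move} can swallow many of the small overheads accumulated over earlier iterations (trivial brackets plus the $\leq 2w$ extra indices per \textsc{Move}); these form a global set $U$ of size up to $3iw$, and one pair may cover nearly all of it. So the honest per-pair bound is $A_{\ell+1}+3iw$, and summing that over $w$ pairs overshoots by a factor of $w$. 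The paper's accounting instead keeps $U$ as a \emph{shared} budget: it bounds $\lvert\area(\tilde\calC_{i+1})\cap I\rvert \leq \lvert U\rvert + w\cdot A_{\ell+1}$, with $\lvert U\rvert\leq 3(i+1)w$, because the decomposition of each pair's area into ``overhead $U$ plus legacy ancestors'' (\cref{lem:area-decompose}) uses the same $U$ for every pair.

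More importantly, your proposal never explains \emph{why} each pair's recursive contribution can be charged to a single recursive call. This is the crux of the proof and requires the \nameref{lem:buffer}: non-trivial pairs produced in distinct iterations $j_1\neq j_2$ are separated by a gap of size $> 3dw$ (a $c_{\ell+1}$-buffer), while \textsc{Move} can only extend the covered area by $\leq 2w$ per iteration, so the gap is never bridged. This is what makes \cref{lem:single-legacy} (every pair in $\tilde\calC_{i+1}$ has all its legacy ancestors from a single iteration $j$) true, and hence what lets one apply the depth-$(\ell+1)$ outer inductive hypothesis once per pair rather than once per past iteration. Without the buffer argument --- which must also be carried through the induction as a separate invariant alongside the cover bound --- the ``slot'' accounting cannot be closed, and your worry about \textsc{Move} merging several slots into one is exactly the failure mode the buffer is designed to rule out. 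So the proposal has the right overall architecture (double induction, per-pair charging) but is missing both the correct quantitative form of the invariant and the spacing mechanism that makes it provable.
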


Let us first argue that \cref{lem:cover} is indeed sufficient, that
is, if \cref{lem:cover} holds, then the intervals with which the
adversaries are invoked are large.

\begin{lemma}[Size Lemma]
  \label{lem:large}
  Consider the adversary~$\textsc{Adversary}(\calC, I, \ell)$ of
  recursion depth~$\ell$, suppose
  that~$\lvert I \rvert \geq n/(4w)^\ell$, and
  let~$\tilde \calC_\indexAfterCall =
  \textsc{Adversary}(\calC_{\indexBeforeCall}, I_{\indexBeforeCall},
  \ell+1)$ as in \cref{alg:adversary}.  If it holds
  that~$\lvert \area(\tilde \calC_\indexBeforeCall) \cap I\rvert \leq
  A_\ell$,
  then~$\lvert I_{\indexBeforeCall} \rvert \geq n/(4w)^{\ell+1}$.
\end{lemma}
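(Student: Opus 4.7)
The plan is a direct pigeonhole argument followed by a small parameter check.

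First I invoke \cref{prop:container-size} applied to $\tilde \calC_i$. Since $\tilde \calC_i$ is a container of the current game state (maintained as an invariant by the adversary, cf.\ \cref{lem:fail}) and the game state has support of size at most $w$, the configuration $\tilde \calC_i$ contains at most $w$ bracket pairs. Restricted to $I$, these pairs carve $I$ into at most $w+1$ maximal uncovered sub-intervals whose lengths sum to at least $|I| - |\area(\tilde \calC_i) \cap I| \ge |I| - A_\ell$. Pigeonhole then yields $|\tilde I_i| \ge (|I| - A_\ell)/(w+1)$.

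Second I control the passage from $\tilde \calC_i$ to $\calC_i = \textsc{Move}(\tilde \calC_i, \tilde I_i)$. By inspection of \cref{alg:move}, every replaced pair lies entirely outside $\tilde I_i$ (red pairs to its right, blue pairs to its left), and each individual replacement extends the area by exactly one index on each side. Even after the merging phase, the leftmost endpoint of the merged pair on the right (respectively the rightmost endpoint on the left) remains the leftmost (respectively rightmost) individual endpoint — i.e.\ still only one index past the original pair closest to $\tilde I_i$. Hence Move covers at most one new index of $\tilde I_i$ from each side, both at the boundary of $\tilde I_i$, so $|I_i| \ge |\tilde I_i| - 2$.

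Combining the two inequalities and using the hypothesis $|I| \geq n/(4w)^\ell$, I get
\[
|I_i| \;\geq\; \frac{n/(4w)^\ell - A_\ell}{w+1} - 2,
\]
and it remains to verify this is at least $n/(4w)^{\ell+1}$. Unfolding the recursion $A_\ell = w(A_{\ell+1} + 3d)$ with $A_{\ell_0} = 3d$ and $d = \lfloor n^\eps\rfloor$ gives $A_\ell = O(d\,w^{\ell_0 - \ell})$; since $w^{\ell_0} \leq n^\eps$ by the choice $\ell_0 = \lfloor \eps \log n/\log w\rfloor$, this is $O(n^{2\eps}/w^\ell)$, much smaller than $n/(4w)^\ell$ for $\eps$ sufficiently small. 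The multiplicative slack $\frac{1}{w+1} - \frac{1}{4w} = \frac{3w-1}{4w(w+1)} = \Theta(1/w)$ between the two denominators then absorbs both $A_\ell/(w+1)$ and the additive $-2$, producing the required bound. The only (mild) obstacle is this final arithmetic: the whole point of tuning $\ell_0 \approx \eps \log n/\log w$ is to keep $A_\ell$ polynomially smaller than $n/(4w)^\ell$ throughout the recursion, and this is where that tuning pays off.
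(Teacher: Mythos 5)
Your proof is correct and takes essentially the same approach as the paper's: count the (at most $w$) bracket pairs of $\tilde\calC_i$, apply pigeonhole to bound $|\tilde I_i|$ from below, account for the loss due to \textsc{Move}, and verify the arithmetic using the bound on $A_\ell$ and the parameter choices. (Your $-2$ for \textsc{Move} is a slight overcount of the paper's $-1$, since only one side of $\tilde I_i$ can actually be touched given the monotone structure, but this is harmless; and you track $A_\ell/w^\ell$ slightly more precisely than the paper's cruder $A_\ell\le O(n^{3\eps})$, which again makes no difference.)
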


\begin{proof}
  We need to argue that there is a large
  interval~$\tilde I_\indexBeforeCall \subseteq I$ such
  that~$\tilde \calC_\indexBeforeCall$ covers no index
  in~$\tilde I_\indexBeforeCall$.
  By minimality of the configuration~$\calC$ (formally this can be
  established by a straightforward induction) it holds that~$\calC$
  contains at most~$w$ top-level bracket pairs. Hence there is such an
  interval~$\tilde I_\indexBeforeCall$ of size
  \begin{align}
    \lvert
    \tilde I_\indexBeforeCall
    \rvert
    \geq
    \frac{
    \lvert
    I
    \setminus
    \area
    \bigl(
    \tilde \calC_\indexBeforeCall
    \bigr)
    \rvert
    }{
    w+1
    } \eqperiod
    \label{eq:I}
  \end{align}
  By assumption and our choice of parameters it holds
  that~$\lvert I \rvert \geq n/(4w)^{\ell} \geq
  \Omega(n^{1-2\eps})$ while
  \begin{align}
    \lvert
    I
    \cap
    \area
    \bigl(
    \tilde \calC_\indexBeforeCall
    \bigr)
    \rvert
    \leq
    A_\ell
    \leq
    O(d\ell_0w^{\ell_0})
    =
    O(n^{3\eps})
    \ll
    \lvert I \rvert/2 \eqperiod
  \end{align}
  Substituting the above estimates into \cref{eq:I} reveals that
  \begin{align}
    \lvert
    \tilde I_\indexBeforeCall
    \rvert
    \geq
    \frac{n}
    {(4w)^\ell2(w+1)}
    \eqperiod
  \end{align}
  This implies that any invocation
  of~$\textsc{Adversary}(\calC_\indexBeforeCall,I_\indexBeforeCall,\ell+1)$
  at recursion-depth~$\ell+1$ ends up with an
  interval~$I_\indexBeforeCall$ of size at
  least~$\lvert I_\indexBeforeCall \rvert \geq n/((4w)^\ell2(w+1)) - 1
  > n/(4w)^{\ell+1}$. Note the minus one due to the replacement of
  bracket pairs in \textsc{Move}. This establishes the statement.
\end{proof}

With the \nameref{lem:large} at hand it remains to prove the
\nameref{lem:cover} in order to conclude \cref{thm:depth-lb}.

We prove the \nameref{lem:cover} by induction on the
recursion--depth~$\ell = \ell_0, \ldots, 0$ and~$i=0, 1, \ldots,
d$. Let us give a naïve proof sketch to then explain how to improve
the bound. The formal proof of the \nameref{lem:cover} is provided in
\cref{sec:proof-cover}.

Suppose by induction
that~$\tilde \calC_\indexAfterCall =
\textsc{Adversary}(\calC_{\indexBeforeCall}, I_{\indexBeforeCall},
\ell+1)$ covers at most~$A_{\ell+1}$ indices
in~$I_{\indexBeforeCall}$. If we apply the inductive hypothesis
naïvely we obtain that~$\tilde \calC_\indexAfterCall$ covers at
most~$(\indexAfterCall)(A_{\ell+1} + 3w)$ indices in~$I$. This is
clearly insufficient since~$\indexAfterCall \in [d]$
and~$d = \lfloor n^\eps \rfloor \gg w$.

The above argument has not used the fact that the game state~$\rho$ is
limited to memory size~$w$. We intend to argue that the game state may
contain at most~$w$ bracket pairs originating from distinct recursive
calls. This allows us to then conclude that~$\tilde \calC_d$ covers at
most~$w(A_{\ell+1} + 3d) = A_\ell$ indices in~$I$ as claimed.

To this end we show that non-trivial bracket pairs obtained from
different recursive calls are far apart. In a bit more detail,
for~$i \neq j$, we intend to argue that two non-trivial bracket
pairs~$P_i \in \tilde \calC_i \setminus \calC_{i-1}$
and~$P_j \in \tilde \calC_j \setminus \calC_{j-1}$ are at distance at
least~$4dw$. Since the \textsc{Move} operation increases the cover by
at most~$2w$ the large gap between~$P_i$ and~$P_j$ cannot be
bridged. Hence if in~$\tilde \calC_d$ both bracket pairs are
remembered, then these are distinct bracket pairs.

Making the above intuition into a formal proof requires some care. To
argue that non-trivial bracket pairs from distinct recursive calls are
far apart we introduce the notion of a \emph{buffer}.

\begin{definition}[Buffer]
  For integer~$s \in \N^+$ and an interval~$I = [a,b] \subseteq [n]$ we
  say that a configuration~$\calC$ has an~\emph{$s$-buffer in~$I$} if
  every bracket pair~$P \in \calC$ satisfies that if~$P$ is contained in
  the interval~$[1,a+s]$, then it is a trivial red pair~$P = \tred$
  and if~$P$ is contained in the interval~$[b-s,n]$, then it is a
  trivial blue pair~$P = \tblue$.
\end{definition}

The following lemma establishes that bracket pairs originating from
distinct recursive calls are far apart. In terms of buffers we
establish that bracket pairs
in~$\tilde \calC_\indexAfterCall \setminus \calC_{\indexBeforeCall}$
have a buffer in~$I_{\indexBeforeCall}$. The lemma is stated in a way
so that it can be readily applied in the inductive proof of the
\nameref{lem:cover}.

\begin{lemma}[Buffer Lemma]
  \label{lem:buffer}
  Consider the adversary~$\textsc{Adversary}(\calC, I, \ell)$ of
  recursion depth~$\ell$, fix~$c_\ell = 4dw - (\ell_0 - \ell)$, and
  let~$\tilde \calC_\indexAfterCall =
  \textsc{Adversary}(\calC_{\indexBeforeCall}, I_{\indexBeforeCall},
  \ell+1)$ as in \cref{alg:adversary}.
  If for some~$i\in[d-1]$ it holds that the
  configuration~$\tilde \calC_{i} \setminus \calC$ has a
  $c_\ell$-buffer in~$I$, the
  configuration~$\tilde \calC_{i} \setminus \calC_{i-1}$ has a
  $c_{\ell+1}$-buffer in~$I_{i-1}$, and the
  configuration~$\tilde \calC_{i+1} \setminus \calC_i$ has a
  $c_{\ell+1}$-buffer in~$I_{i}$,
  then~$\tilde \calC_{i+1} \setminus \calC$ has a~$c_\ell$-buffer
  in~$I$.
\end{lemma}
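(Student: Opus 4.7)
The plan is to partition $\tilde\calC_{i+1}\setminus\calC$ according to where each pair originates and to verify the $c_\ell$-buffer in $I$ separately on each part. Writing $\calC_i=(\tilde\calC_i\setminus \calP)\cup \calP'$ and using that pairs in $\calP'$ are freshly created by \textsc{Move} and hence not in $\calC$, one has
\[
\tilde\calC_{i+1}\setminus \calC \;=\; \underbrace{\bigl(\tilde\calC_{i+1}\setminus \calC_i\bigr)}_{(A)} \;\cup\; \underbrace{\bigl((\tilde\calC_i\setminus \calP)\setminus \calC\bigr)}_{(B)} \;\cup\; \underbrace{\calP'}_{(C)}.
\]

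For piece $(A)$, hypothesis H3 supplies a $c_{\ell+1}$-buffer in $I_i=[a_i,b_i]$. Since $I_i\subseteq I=[a,b]$ and $c_{\ell+1}=c_\ell+1$, the buffer slots of $I$ are contained in those of $I_i$: explicitly, $[1,a+c_\ell]\subseteq [1,a_i+c_{\ell+1}]$ and $[b-c_\ell,n]\subseteq [b_i-c_{\ell+1},n]$. Hence a pair of $(A)$ inside a buffer slot of $I$ lies inside the corresponding slot of $I_i$, and H3 forces the required triviality and colour. For piece $(B)$, the inclusion $(B)\subseteq \tilde\calC_i\setminus\calC$ makes H1 apply verbatim.

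The crux is piece $(C)$. Consider $P\in \calP'$, obtained from a family $\{P_0^{(k)}\}_k\subseteq\tilde\calC_i$ of replaced pairs of the opposite colour (blue left of $\tilde I_i$ or red right of $\tilde I_i$); by construction the area of $P$ is the union of minimum-cover expansions of the $P_0^{(k)}$'s, merged where they overlap. I first observe that every $P_0^{(k)}$ actually lies in $\tilde\calC_i\setminus \calC$: by \cref{lem:fail} applied inductively, $\calC$ is a monotone container whose separation interval contains $I$, so the blue top-level pairs of $\calC$ lie strictly to the right of $b$ and the red top-level pairs strictly to the left of $a$. Hence no wrong-side replaced pair relative to $\tilde I_i \subseteq I$ can come from $\calC$, and H1 applies to each $P_0^{(k)}$.

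The easy direction is the same-side buffer: for a new red $P$ replacing blue pairs, H1 forbids $P_0^{(k)}\subseteq [1,a+c_\ell]$ (otherwise $P_0^{(k)}$ would have to be trivial red), so each $P_0^{(k)}$ contains an index strictly greater than $a+c_\ell$, and therefore so does $P$. The main obstacle is the opposite-side buffer, i.e., ruling out a new red $P\subseteq [b-c_\ell,n]$. My plan is to argue by contradiction: in that case each $P_0^{(k)}\subseteq P\subseteq [b-c_\ell,n]$, so H1 forces each $P_0^{(k)}$ to be trivial blue; and since $P$ strictly contains every $P_0^{(k)}=\set{j_k}$, its leftmost endpoint is at most $j_1-1$ where $j_1$ is the smallest $j_k$, forcing $j_1\geq b-c_\ell+1$ and pinning all replaced trivials inside the thin strip $[b-c_\ell+1,\tilde a_i-1]$ of width $\leq c_\ell$ immediately before $\tilde I_i$. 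From here I would combine the inductive bound on $|\area(\tilde\calC_i)\cap I|$ furnished by the Cover Lemma (which guarantees $\tilde I_i$ is large and therefore situated far from $b$) with hypothesis H2 applied to those trivials lying in $\tilde\calC_i\setminus \calC_{i-1}$ (whose positions are constrained by the $c_{\ell+1}$-buffer in $I_{i-1}$) to derive a contradiction. The symmetric argument rules out a new blue $P\subseteq [1,a+c_\ell]$.
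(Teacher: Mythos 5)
Your decomposition of $\tilde\calC_{i+1}\setminus\calC$ into $(A)\cup(B)\cup(C)$ is exactly the paper's, and your handling of $(A)$ via H3 and $(B)$ via H1 is the same as the paper's. For $(C)$ the two arguments diverge: where the paper uses H2 to push the replaced pairs away from the left endpoint of $I_{i-1}$ (and then exploits $c_{\ell+1}=c_\ell+1$ so that the one-index expansion performed by \textsc{Move} exactly eats the slack), you instead apply H1 directly to the replaced pairs. Your same-side argument is correct and, if anything, a bit cleaner than the paper's. The observation that the replaced pairs lie in $\tilde\calC_i\setminus\calC$ because $I$ is inside the separation interval of the monotone container $\calC$ is valid.

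The gap is in the opposite-side case, which you call the ``main obstacle.'' You correctly pin the replaced trivial blues into the strip $[b-c_\ell+1,\;\tilde a_i-1]$, but then you do not close the contradiction; you only say you ``would combine'' the Cover Lemma with H2. This final step is where the argument needs to actually happen, and the tools you reach for are not the right ones: H2 speaks only about a buffer near the \emph{left} endpoint of $I_{i-1}$ and gives you nothing about positions near $b$, so ``applying H2 to those trivials'' cannot produce the contradiction. In fact no combination is needed --- once you know $\tilde I_i\subseteq I$ and $|\tilde I_i|>c_\ell$ (guaranteed by the interval-size invariant, i.e., the adversary would have failed otherwise; the Cover Lemma per se is not what you want to cite), you get $\tilde a_i\le b-|\tilde I_i|+1<b-c_\ell+1$, making the strip $[b-c_\ell+1,\tilde a_i-1]$ empty, contradiction. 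This is also how the paper disposes of the opposite side (``since $I_i$ is large $\dots$ the newly added bracket pairs cannot cover the $c_\ell$ right-most indices of $I$''). So: the proposal is on the right track and two of the three cases are handled correctly, but the crux of $(C)$ --- the opposite-side bound --- is declared rather than proved, and the stated plan for finishing it would not work as written.
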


\begin{proof}
  Let~$[a,b] = I$ and consider a bracket
  pair~$P \in \tilde \calC_{i+1} \setminus \calC$.
  If~$P \in \tilde \calC_{i+1} \setminus \calC_i$
  a~$c_{\ell+1}$-buffer in~$I_i$, then there is nothing to show
  since~$I_i \subseteq I$ and
  hence~$\tilde \calC_{i+1} \setminus \calC_i$ is also
  a~$c_\ell$-buffer in~$I$.

  Otherwise~$P \in \calC_i \setminus \calC$.
  If~$P \in \tilde \calC_{i} \setminus \calC$, then there is nothing
  to show since~$\tilde \calC_{i} \setminus \calC$ is
  a~$c_\ell$-buffer in~$I$.
  We are left to show the statement for bracket
  pairs~$P \in \calC_{i} \setminus \tilde\calC_i$. Note that these
  bracket pairs~$P$ were added by~\textsc{Move} and are thus
  non-trivial. We need to establish that all such pairs~$P$ are
  contained in the interval~$[a+c_\ell+1,b-(c_\ell+1)]$.

  Observe that all bracket pairs
  in~$\calC_{i} \setminus \tilde \calC_{i}$ are of equal colour since
  the maintained configurations are monotone. Without loss of
  generality we may assume that the replaced bracket
  pairs~$\tilde \calC_{i} \setminus \calC_{i}$ are {\blue blue} and
  that the newly added pairs~$\calC_{i} \setminus \tilde \calC_{i}$
  are {\red red}.

  Denote by~$a_{i-1}$ the left endpoint of the interval~$I_{i-1}$.
  Since the configuration~$\tilde \calC_{i} \setminus \calC_{i-1}$ has
  a $c_{\ell+1}$-buffer in~$I_{i-1}$ it holds that the~$c_{\ell+1}$
  left-most indices of~$I_{i-1}$ do \emph{not} contain a blue bracket
  pair. Further using the fact that~$\calC_{i-1}$ covers no positions
  in~$I_{i-1}$ we obtain that all bracket
  pairs~$\tilde \calC_{i} \setminus \calC_{i}$ being replaced are
  contained in the interval~$[a_{i-1} + c_{\ell+1} + 1, b]$ and the
  new bracket pairs are thus contained in the
  interval~$[a_{i-1} + c_{\ell} + 1, b]$.
  Furthermore, since~$I_{i}$ is
  large~$\lvert I_{i} \rvert \gg c_{\ell+1}$ for our choice of
  parameters it holds that the newly added bracket
  pairs~$\calC_{i} \setminus \tilde \calC_{i}$ cannot cover
  the~$c_\ell$ right-most indices of~$I$. Hence all these bracket
  pairs are contained in the interval~$[a+c_\ell+1,b-(c_\ell+1)]$ as
  required. This establishes the statement.
\end{proof}

As previously explained the notion of a buffer lets us argue that
bracket pairs from different recursive calls are far apart. The next
definition formalizes what it means for a bracket pair to originate
from a recursive call.

\begin{definition}[Legacy, ancestors]
  Consider~$\textsc{Adversary}(\calC, I, \ell)$ of recursion
  depth~$\ell$ and
  let~$\tilde \calC_\indexAfterCall =
  \textsc{Adversary}(\calC_{\indexBeforeCall}, I_{\indexBeforeCall},
  \ell+1)$ as in \cref{alg:adversary}.
  Consider a non-trivial bracket pair~$P \in \tilde \calC_{i}$. If~$P$
  is not present in~$\calC_{i-1}$, then~$P$ is a \emph{legacy-$i$
    bracket pair}.  If~$\tilde \calC_{i-1}|_{P}$ denotes the set of
  non-trivial bracket pairs~$P' \in \tilde \calC_{i-1}$
  satisfying~$\area(P') \subseteq \area(P)$, then the \emph{legacy
    ancestors} of a bracket pair~$P$ are
  \begin{align*}
    \ancestors(P)
    &=
      \begin{cases}
        \set{P}
        &\text{if~$P$ is a legacy-$i$ bracket pair, and}\\
        \bigcup_{
        P' \in \tilde \calC_{i-1}|_{P}
        }
        \ancestors(P')
        &\text{otherwise.}
      \end{cases}
  \end{align*}
\end{definition}

The area of a bracket pair is contained in the area of its legacy
ancestors except for a few indices as stated next.

\begin{lemma}
  \label{lem:area-decompose}
  Consider the adversary~$\textsc{Adversary}(\calC, I, \ell)$ of
  recursion depth~$\ell$ and
  let~$\tilde \calC_\indexAfterCall =
  \textsc{Adversary}(\calC_{\indexBeforeCall}, I_{\indexBeforeCall},
  \ell+1)$ as in \cref{alg:adversary}.
  For all~$i \in \set{0,1, \ldots, d}$ there is a
  subset~$U \subseteq [n]$ of size~$\lvert U \rvert \leq 3iw$ such
  that for any bracket pair~$P \in \tilde \calC_{i}$ contained in
  the interval~$\area(P) \subseteq I$ it holds that
  \begin{align*}
    \area(P) \subseteq U \cup
    \bigcup_{P' \in \ancestors(P)} \area(P')\eqperiod
  \end{align*}
\end{lemma}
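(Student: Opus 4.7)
The plan is to prove the lemma by induction on $i$. The base case $i=0$ is immediate: for $P \in \tilde{\calC}_0 = \calC$ we have $\ancestors(P) = \set{P}$ and $U = \emptyset$ suffices.

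For the inductive step, assume the statement for $i-1$ with some set $U_{i-1}$ of size at most $3(i-1)w$. I set $U_i \coloneqq U_{i-1} \cup V_i$, where $V_i$ collects (a) the indices newly covered inside $I$ by \textsc{Move} in the $(i-1)$-st iteration (the new endpoints added when a pair $Q \in \tilde{\calC}_{i-1}$ is wrapped in its minimum opposite-colour enclosing pair, and any further indices introduced by the subsequent merges), together with (b) the isolated indices of any trivial pairs in $\tilde{\calC}_{i-1}$ that get absorbed by \textsc{Move}, since trivial pairs are explicitly excluded from $\tilde{\calC}_{i-1}|_P$ and therefore cannot appear among legacy ancestors. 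Using that a container contains at most $w$ bracket pairs (\cref{prop:container-size}) and that top-level pairs are disjoint (so each minimum enclosing pair extends by only one index on each side), one can bound $|V_i| \leq 3w$ and hence $|U_i| \leq 3iw$.

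For the case analysis, fix $P \in \tilde{\calC}_i$ with $\area(P) \subseteq I$. If $P$ is legacy-$i$, then $\ancestors(P) = \set{P}$ and the claim is trivial. Otherwise $P \in \calC_{i-1}$. In the subcase $P \in \tilde{\calC}_{i-1}$, the IH applied to $P$ at level $i-1$, together with $P \in \tilde{\calC}_{i-1}|_P$ and hence $\ancestors_{i-1}(P) \subseteq \ancestors_i(P)$, yields the claim. In the subcase $P \in \calC_{i-1} \setminus \tilde{\calC}_{i-1}$, the pair $P$ was produced by \textsc{Move} from a collection $\calQ \subseteq \tilde{\calC}_{i-1}$ (each $Q \in \calQ$ was wrapped by $Q'$, with overlapping $Q'$s merged into $P$). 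By construction, $\area(P) \setminus \bigcup_{Q \in \calQ} \area(Q) \subseteq V_i$. For each non-trivial $Q \in \calQ$ we have $Q \in \tilde{\calC}_{i-1}|_P$, and so $\ancestors_{i-1}(Q) \subseteq \ancestors_i(P)$; the IH then places $\area(Q)$ inside $U_{i-1} \cup \bigcup_{R \in \ancestors_i(P)} \area(R)$. For trivial $Q \in \calQ$, the single index $\area(Q)$ already lies in $V_i \subseteq U_i$. Combining gives $\area(P) \subseteq U_i \cup \bigcup_{R \in \ancestors_i(P)} \area(R)$, completing the induction.

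The main obstacle is the geometric accounting that justifies $|V_i| \leq 3w$: one must verify that the minimum opposite-colour enclosing pair of each top-level pair in a container can indeed be chosen to extend by a single index on each side, that merging adjacent enclosing pairs contributes only constantly many additional indices per merge, and that trivial pairs absorbed by \textsc{Move} contribute at most one index each to $V_i$ since they are excluded from the legacy-ancestor structure. The rest of the argument is routine bookkeeping.
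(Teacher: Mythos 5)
Your proof follows the same strategy as the paper's: induct on~$i$, take the new block of exceptional indices to be the indices gained by \textsc{Move} (at most~$2w$, since there are at most~$w$ bracket pairs by \cref{prop:container-size}) together with the trivial brackets of~$\tilde\calC_{i-1}$ (at most~$w$), bound the increment by~$3w$, and discharge each non-trivial sub-pair via the inductive hypothesis and the recursive definition of ancestors. Your case split (legacy-$i$ vs.\ $P\in\tilde\calC_{i-1}$ vs.\ $P\in\calC_{i-1}\setminus\tilde\calC_{i-1}$) is in fact slightly more explicit than the paper's, which folds everything into the single containment $\area(P)\subseteq U_{i+1}\cup\bigcup_{P'\in\tilde\calC_i|_P}\area(P')$; worth noting that this containment as stated does not hold for legacy pairs (for which $\tilde\calC_i|_P$ can be empty), so your separate handling of that case is a genuine clarification rather than redundancy. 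The ``geometric accounting'' you flag as the main obstacle is no harder than in the paper: the $\leq 2w$ increase from \textsc{Move} does not require each enclosing pair to extend by exactly one index on each side, only that the final configuration~$\calC_i$ has at most~$w$ pairs (each contributing at most two endpoint indices outside the areas they wrap), and merges never create fresh covered indices beyond those endpoints. With that reading, your bound $\lvert V_i\rvert\leq 3w$ matches the paper and the proof is complete.
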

\begin{proof}
  By induction on~$i = 0, 1, \ldots, d$. For~$i = 0$ there is nothing
  to show since~$\tilde \calC_0 = \calC$ and~$\calC$ covers no index
  in~$I$ by construction.

  We may thus assume the statement for some~$i$ to prove the statement
  for~$i+1$. Recall that
  \begin{align}
    \lvert \area(\calC_{i}) \setminus \area(\tilde \calC_{i})\rvert \leq 2w
  \end{align}
  since the \textsc{Move} operation increases the area of the
  resulting configuration by at most~$2w$. Add all the indices in the
  difference to~$U_{i+1}$ along with all trivial bracket pairs
  in~$\tilde \calC_{i}$. Observe that~$\lvert U_{i+1} \rvert \leq 3w$
  and that it holds that
  \begin{align}
    \area(P)
    \subseteq
    U_{i+1}
    \cup
    \bigcup_{P' \in \tilde \calC_{i}|_{P}}
    \area(P')
    \eqperiod
  \end{align}
  The statement follows by appealing to the inductive hypothesis for
  each~$P' \in \tilde \calC_i|_P$ separately.
\end{proof}

Finally we are in a position to formally argue that a bracket pair
originates from a single legacy, that is, from a single recursive
call.

\begin{lemma}
  \label{lem:single-legacy}
  Consider the adversary~$\textsc{Adversary}(\calC, I, \ell)$ of
  recursion depth~$\ell$ and
  let~$\tilde \calC_\indexAfterCall =
  \textsc{Adversary}(\calC_{\indexBeforeCall}, I_{\indexBeforeCall},
  \ell+1)$ as in \cref{alg:adversary}.
  If for all~$j \leq i$ it holds
  that~$\tilde \calC_{j} \setminus \calC_{j-1}$ has
  a~$c_{\ell+1}$-buffer in~$I_{j-1}$, then for every bracket
  pair~$P \in \tilde \calC_{i}$ there is a~$j \leq i$ such that all
  legacy-ancestors of~$P$ are legacy-$j$ bracket pairs.
\end{lemma}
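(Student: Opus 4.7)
My plan is to proceed by induction on $i$. The base case $i=0$ is vacuous, since $\tilde\calC_0 = \calC$ covers no index in $I$ by construction, so there is no bracket pair to analyze. For the inductive step I would split into three cases according to the provenance of $P \in \tilde\calC_i$.

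If $P \notin \calC_{i-1}$, then $P$ is itself legacy-$i$ and $\ancestors(P) = \{P\}$ by definition, so $j = i$ works. If $P \in \tilde\calC_{i-1}$, then because containers consist of top-level bracket pairs we have $\tilde\calC_{i-1}|_P = \{P\}$, whence $\ancestors(P)$ computed at level $i$ equals $\ancestors(P)$ computed at level $i-1$, and the induction hypothesis applies directly.

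The crux is the remaining case $P \in \calC_{i-1} \setminus \tilde\calC_{i-1}$, where $P$ is freshly created by \textsc{Move} at iteration $i-1$. Enumerate $\tilde\calC_{i-1}|_P = \{P'_1, \ldots, P'_r\}$ from left to right. By the inductive hypothesis each $P'_s$ has a single legacy $j_s \leq i-1$, and I need to show that all $j_s$ coincide. I would argue by contradiction: assume $j_k \neq j_{k+1}$ for some $k$. Applying \cref{lem:area-decompose} at level $i-1$ together with the inductive hypothesis yields $\area(P'_k) \subseteq U \cup I_{j_k-1}$ and $\area(P'_{k+1}) \subseteq U \cup I_{j_{k+1}-1}$ for a common slack set $U$ of size at most $3(i-1)w = O(dw)$. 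Since the intervals $I_{j_k-1}$ and $I_{j_{k+1}-1}$ are disjoint sub-intervals of $I$ chosen in distinct outer-loop iterations, and since the $c_{\ell+1}$-buffer hypothesis constrains how close the non-trivial legacy ancestors can sit to the boundaries of their enclosing intervals, the effective separation between $P'_k$ and $P'_{k+1}$ is at least $\Omega(c_{\ell+1}) - O(|U|) = \Omega(dw)$. On the other hand, \textsc{Move} creates $P$ by merging chains of minimum opposite-colour enclosures, each of which extends its source by only a few indices, so consecutive children in the same chain can be separated by at most a small constant number of uncovered indices. This contradicts the $\Omega(dw)$ separation forced by the different legacies.

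The main obstacle will be making the separation estimate rigorous. I will need to carefully account for (i) the cumulative slack of order $O(dw)$ introduced by intervening \textsc{Move} operations between iteration $j_k$ and iteration $i-1$, as bounded by \cref{lem:area-decompose}; (ii) the $c_{\ell+1} = 4dw - (\ell_0 - \ell - 1)$ buffer width; and (iii) the positioning of $I_{j_k-1}$ and $I_{j_{k+1}-1}$ inside $I$ — especially the subtle case where these intervals are adjacent, where the argument must lean on the buffer on both sides to prevent the legacy-ancestor areas from meeting across the common boundary.
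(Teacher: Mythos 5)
Your overall structure (induction on $i$, with Cases 1 and 2 handled by the definitions and Case 3 reduced to showing that all children $P'_s \in \tilde\calC_{i-1}|_P$ share a single legacy) is a reasonable reorganisation of the argument, and Cases 1 and 2 are correct. However, the pivotal step of Case 3 rests on the claim that $I_{j_k-1}$ and $I_{j_{k+1}-1}$ are \emph{disjoint} sub-intervals of $I$. This is false: by construction, each $\tilde I_{m}$ is chosen as the largest free interval inside $I$ \emph{after} $\tilde\calC_{m}$ has grown, so (in the absence of forgets) the intervals are \emph{nested}, $I_0 \supseteq I_1 \supseteq \cdots$, and even with forgets there is no disjointness guarantee. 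Because the two intervals can overlap --- indeed one can sit inside the other --- the decomposition $\area(P'_k)\subseteq U\cup I_{j_k-1}$ and $\area(P'_{k+1})\subseteq U\cup I_{j_{k+1}-1}$ gives no spatial separation at all, and the contradiction you are aiming for evaporates. The difficulty you flag at the end about ``adjacent'' intervals is a red herring; the real obstruction is that the intervals are not side by side in the first place.

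The paper avoids this by never comparing two different intervals directly. Instead it argues by contradiction with two adjacent ancestors $P_1,P_2 \in \ancestors(P)$ from distinct legacies $j_1<j_2$, and then \emph{time-shifts} $P_1$ to the single moment $j_2$: along the chain from $P$ down to $P_1$ there is a pair $P'_1\in\tilde\calC_{j_2}$ with $\area(P'_1)\supseteq\area(P_1)$, and since $P'_1$ is not legacy-$j_2$ it lies in $\calC_{j_2-1}$, hence outside $I_{j_2-1}$. Meanwhile $P_2$, being a non-trivial pair of $\tilde\calC_{j_2}\setminus\calC_{j_2-1}$ with a $c_{\ell+1}$-buffer in $I_{j_2-1}$, sits more than $c_{\ell+1}>3dw$ deep inside $I_{j_2-1}$. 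Now both objects are positioned relative to the \emph{same} interval $I_{j_2-1}$, so the gap between them exceeds $3dw$; yet by \cref{lem:area-decompose} that gap (which contains no other ancestor of $P$) must be absorbed by the slack set $U$ of size at most $3dw$ --- contradiction. The move of locating a single pair $P'_1$ at time $j_2$ is exactly what replaces the disjointness you were hoping for, and is the missing idea in your plan.
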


\begin{proof}
  By contradiction. Suppose there are two bracket
  pairs~$P_1,P_2 \in \ancestors(P)$ from distinct
  legacies~$j_1 < j_2$. Without loss of generality we may assume that
  there is no bracket pair~$P^\star \in \ancestors(P)$ that lies
  in-between~$P_1$ and~$P_2$.

  By definition of an ancestor there is a non-trivial bracket
  pair~$P'_1 \in \tilde \calC_{j_2}$ such
  that~$\area(P'_1) \supseteq \area(P_1)$. By assumption
  $\tilde \calC_{j_2} \setminus \calC_{j_2-1}$ has
  a~$c_{\ell+1}$-buffer in~$I_{j-1}$. Hence the distance between the
  bracket pairs~$P'_1$ and~$P_2$ is strictly larger than~$3dw$.
  
  Since there are no other legacy-ancestors in-between~$P'_1$
  and~$P_2$ by \cref{lem:area-decompose} all these indices
  between~$P'_1$ and~$P_2$ need to be covered by the set~$U$ of
  size~$\lvert U \rvert \leq 3dw$. This cannot be; the statement
  follows.
\end{proof}

\subsection{Proof of the~\nameref{lem:cover}}
\label{sec:proof-cover}

As previously explained we prove the \nameref{lem:cover} by induction
on the recursion--depth~$\ell = \ell_0, \ldots, 0$ and
on~$i=0,1, \ldots, d$. It is most convenient to think of the recursion
as an outer recursion on~$\ell$ and an inner recursion on~$i$.

The outer induction on the recursion--depth~$\ell = \ell_0, \ldots, 0$
establishes the following statement.
For~$\calC' = \textsc{Adversary}(\calC, I, \ell)$ of recursion
depth~$\ell$ it holds that if~$\lvert I \rvert \geq n/(4w)^\ell$, then
\begin{enumerate}
\item $\calC' \setminus \calC$ has a~$c_\ell$-buffer in~$I$ and
\item $\calC'$ covers at most~$A_\ell$ indices in~$I$.
\end{enumerate}

For the base case we need to
consider~$\calC' = \textsc{Adversary}(\calC,I,\ell_0)$. Note that by
construction~$\calC$ does not cover any index in~$I$. Hence if the
resulting~$\calC'$ covers any index in~$I$, then this is due to a
trivial bracket pair~$P$ either coloured red or blue. Since for our
choice of
parameters~$\lvert I \rvert \geq n/(4w)^\ell \gg 8dw = 2c_{\ell_0}$
the bracket pair~$P$ is appropriately coloured.
Hence~$\calC'\setminus \calC$ has a~$c_{\ell_0}$-buffer
in~$I$. Since~$\calC'$ covers at most~$1 \leq 3d = A_{\ell_0}$ indices
in~$I$ the base case is established.

We may thus assume the above statement for adversaries of
recursion--depth~$\ell+1$. It remains to show the statement
for~$\textsc{Adversary}(\calC,I,\ell)$ of recursion--depth~$\ell$. By
an inner induction on~$i = 0, 1, \ldots, d$ we intend to establish that
if~$\lvert I \rvert \geq n/(4w)^\ell$, then
\begin{enumerate}
\item $\tilde \calC_i \setminus \calC$ has a~$c_\ell$-buffer in~$I$
  and
  \label[prop]{prop:buffer}
\item $\tilde \calC_i$ covers at most~$w\big(A_{\ell+1} + 3i\big)$
  indices in~$I$.
  \label[prop]{prop:cover}
\end{enumerate}
Observe that in order to establish \nameref{lem:cover} it suffices to
complete the above (inner and outer) induction.

There is nothing to show for the (inner) base case~$i=0$
since~$\tilde \calC_0 = \calC$ and~$\calC$ covers no index in~$I$ by
construction.

We may thus assume the statement for~$i$ to show it for~$i+1$. First
we appeal to the \nameref{lem:large} to conclude
that~$\lvert I_i\rvert \geq n/(4w)^{\ell+1}$. This allows us to appeal
to the outer induction hypothesis to conclude
that~$\tilde \calC_{i+1} \setminus \calC_i$ has a~$c_{\ell+1}$ buffer
in~$I_i$.

To establish \cref{prop:buffer} we may appeal to the
\nameref{lem:buffer} with the inner and outer induction hypothesis.

It remains to establish \cref{prop:cover}. According to
\cref{prop:container-size} there are at most~$w$ bracket pairs in the
configuration~$\tilde \calC_{i+1}$. Consider one such bracket
pair~$P \in \tilde \calC_{i+1}$ contained in~$\area(P) \subseteq
I$. By \cref{lem:single-legacy} we may assume that there is
a~$j \leq i+1$ such that all ancestors are legacy-$j$ bracket pairs.
By induction we know that~$\lvert I_j \rvert \geq n/(4w)^{\ell+1}$ and
we may thus appeal to the outer inductive hypothesis to conclude that
the legacy ancestors of~$P$ cover at most~$A_{\ell+1}$
indices. Combining this estimate with \cref{lem:area-decompose} we
obtain that~$\lvert \area(P)\rvert \leq A_{\ell+1} + 3(i+1)w$.  This
implies
that~$\lvert \area(\tilde \calC_{i+1}) \cap I \rvert \leq
w\big(A_{\ell+1} + 3(i+1)\big)$ since the set~$U$ (as in
\cref{lem:area-decompose}) is independent of the choice of the bracket
pair. This establishes \cref{prop:cover} of the inner induction
for~$i+1$.

This establishes the induction and thereby the
\nameref{lem:cover}. \Cref{thm:depth-lb} follows from combining
\cref{lem:fail,lem:cover,lem:large} and the fact
that~$\textsc{Adversary}(\emptyset,[n],0)$
plays~$d^{\ell_0} = n^{\Omega(\frac{\log n}{\log w})}$ rounds of the
prover--adversary game before terminating.

\section{Proofs of \texorpdfstring{\cref{thm:main-circuit,thm:main-proof}}{Theorems 1 and 2}}
\label{sec:lifting}

Before proving \cref{thm:main-proof,thm:main-circuit} we need to
recall what it means to compose a CNF formula~$F$ with indexing
gadget~$\Ind_t\colon [t] \times \{0, 1\}^t \to \{0, 1\}$ that maps an
input~$(x, y)$ to $y_x$.
If~$z_1, z_2, \dots, z_n$ denote the variables of~$F$,
then~$F \circ \Ind_t^n$ denotes the CNF formula obtained from~$F$ by
substituting each variable~$z_i \mapsto \Ind_t(x_i, y_i)$,
where~$x_i, y_i$ are new sets of variables. Note that if~$F$ is an
unsatisfiable $k$-CNF formula with~$m$ clauses,
then~$F \circ \Ind_t^n$ is an unsatisfiable CNF formula over~$O(nt)$
variables and~$O(tm^k + n)$ clauses.

Let us start with the proof of \cref{thm:main-proof}. The theorem
follows from \cref{thm:width-ub,thm:depth-lb} along with an
application of dag-like lifting \cite{Garg20}. The following theorem
is taken from~\cite{Fleming22}.

\begin{theorem}[\cite{Garg20, LovettMMPZ22, Fleming22}]
  \label{th:lifting-proofs}
  Let $\varepsilon > 0$ be any constant and let $F$ be an
  unsatisfiable CNF formula on $n$ variables. For any semantic CP
  proof $\Pi$ of $F \circ \Ind^n_{n^{1 + \varepsilon}}$ there is a
  resolution refutation $\Pi^*$ of $F$ satisfying:
  \begin{itemize}
  \item
    $\mathrm{width}(\Pi^*) =
    O\left( \frac{\log |\Pi|}{\log n}\right)$;
  \item
    $\mathrm{depth}(\Pi^*) =
    O\left( \mathrm{depth}(\Pi) \frac{\log|\Pi|}{\log n} \right)$.
  \end{itemize}   
\end{theorem}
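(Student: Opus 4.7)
Since the statement is quoted verbatim from prior work \cite{Garg20, LovettMMPZ22, Fleming22} and the authors use it as a black box, my plan is to sketch the standard dag-like lifting paradigm that underlies such simulation theorems rather than to reproduce the technical machinery in full.

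The plan is to simulate the semantic CP proof~$\Pi$ of~$F \circ \Ind^n_t$, where~$t = n^{1+\eps}$, by a collection of shallow decision trees, one per line of~$\Pi$, that query whole blocks~$(x_i, y_i)$ of the lifted variables. Concretely, for each line~$L$ of~$\Pi$ I would associate a decision tree~$T_L$ whose internal nodes query a block and whose leaves correspond to a restriction of a small number of blocks under which~$L$ is either satisfied outright or shown to follow, via the indexing gadget, from a specific clause of~$F$. The restriction to queries at the block level is what preserves the semantics of~$\Ind_t$: fixing a block~$(x_i, y_i)$ determines the value of the original variable~$z_i$ of~$F$.

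The technical heart of the argument is the simulation lemma for~$\Ind_t$: if~$\Pi$ has size~$|\Pi|$, then each line admits a decision tree of block-depth at most~$O(\log |\Pi| / \log n)$. Morally, any subset of the lifted variable space that is ``dense'' (relative to~$|\Pi|$) must, under~$\Ind_t$ with~$t = n^{1+\eps}$, either expose at most this many blocks or become structured like a combinatorial rectangle that projects down to a single clause of~$F$. The gadget size~$t = n^{1+\eps}$ is tuned precisely so that the trade-off between the proof size~$|\Pi|$ and the achievable block-query depth collapses to~$O(\log |\Pi| / \log n)$ rather than the~$O(\log |\Pi|)$ one would naïvely get. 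Once the per-line trees are in hand, I would stitch them together into a resolution refutation~$\Pi^*$ of~$F$ in the standard way: each line~$L$ yields a family of resolution clauses indexed by the leaves of~$T_L$, and each inference step of~$\Pi$ becomes a short resolution derivation between the clauses of the predecessor lines and those of the successor. The width bound is then immediate, since each clause mentions only the~$O(\log |\Pi| / \log n)$ variables queried along its leaf path, and the depth bound follows from composing the depth of~$\Pi$ with the per-line decision-tree depths.

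The main obstacle is the simulation lemma itself: obtaining block-query depth~$O(\log |\Pi| / \log n)$ while keeping the gadget~$\Ind_t$ at only~$t = n^{1+\eps}$ is the substantive content of \cite{Garg20, LovettMMPZ22}, and relies on fairly delicate rectangle/density-increment arguments that I would invoke as black boxes. Everything else (the stitching of decision trees into a resolution refutation, and the bookkeeping for width and depth) is a routine consequence of that core lemma.
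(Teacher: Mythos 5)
The paper quotes this theorem verbatim from \cite{Fleming22} (which in turn rests on \cite{Garg20,LovettMMPZ22}) and provides no proof of its own, so your decision to invoke it as a black box is exactly how the paper handles it. Your informal sketch of the indexing-gadget simulation is a reasonable gloss of what those references establish, though note that the dag-like extraction there is organised around structured rectangle covers (triangle certificates) rather than literal per-line decision trees --- the latter framing is closer to the tree-like, Raz--McKenzie picture.
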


The width and size of refuting a CNF formula~$F$ are closely related to
the size of refuting the composed
formula~$F \circ \Ind^n_{n^{1+\eps}}$. The following claim is folklore
-- the proof is straightforward and may be found in,
e.g.,~\cite{SegerlindBI04}.

\begin{claim}[\cite{SegerlindBI04}, \cite{Garg20}]
  \label{cl:liftedres-up}
  Let~$F$ be an unsatisfiable CNF formula. If~$F$ admits a resolution
  refutation of width~$w$ and size~$s$, then~$F \circ \Ind_t^n$ admits
  a resolution refutation of size~$s t^{O(w)}$.
\end{claim}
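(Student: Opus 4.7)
The plan is to simulate the given width-$w$ size-$s$ resolution refutation $\pi$ of $F$ clause-by-clause in $F \circ \Ind_t^n$, paying only a multiplicative overhead of $t^{O(w)}$ per original derivation step. For every clause $C = \bigvee_{k=1}^{u} z_{i_k}^{b_k}$ of width $u \leq w$ appearing in $\pi$, define its \emph{lift} $C^{\star}$ as the collection of $t^u$ clauses obtained by, for each $(j_1, \ldots, j_u) \in [t]^u$, replacing every literal $z_{i_k}^{b_k}$ by the disjunction $(x_{i_k} \neq j_k) \lor y_{i_k, j_k}^{b_k}$. By construction $C^{\star}$ is precisely the set of CNF clauses produced by the standard encoding of $C$ after substituting $z_{i_k} \mapsto \Ind_t(x_{i_k}, y_{i_k})$. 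In particular, if $C$ is an axiom of $F$ then every clause in $C^{\star}$ is an axiom of $F \circ \Ind_t^n$, so the base case is for free.

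For the inductive step, suppose $C = A' \lor B'$ is obtained by resolving $A = z_i \lor A'$ and $B = \neg z_i \lor B'$ on $z_i$, and assume that $A^{\star}$ and $B^{\star}$ have already been derived. I would derive each target $D \in C^{\star}$, indexed by a selector setting for the variables of $A' \cup B'$, as follows. For each $j \in [t]$, the extension of this setting by $x_i = j$ picks out one clause from $A^{\star}$ and one from $B^{\star}$, of the form $(x_i \neq j) \lor y_{i,j} \lor D_A$ and $(x_i \neq j) \lor \neg y_{i,j} \lor D_B$, where $D_A \lor D_B$ is the ``$y$-part'' of $D$; a single resolution on $y_{i,j}$ produces $(x_i \neq j) \lor D$. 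Collecting these $t$ clauses and resolving them pairwise on the bits encoding $x_i$ (or on the one-hot selector variables together with the totality axiom present in the encoding of $\Ind_t$) collapses them into $D$. This phase generates $\poly(t)$ clauses per target, and $|C^{\star}| \leq t^w$, so each resolution step of $\pi$ is simulated by $t^{O(w)}$ new clauses.

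Summing over the $s$ steps of $\pi$ yields a refutation of $F \circ \Ind_t^n$ of size $s \cdot t^{O(w)}$; it reaches the empty clause because $\bot^{\star} = \{\bot\}$. The main subtlety is the ``collapse'' sub-derivation at the end of the inductive step: one must check that the $t$ clauses $\{(x_i \neq j) \lor D\}_{j \in [t]}$ can indeed be combined back to $D$ purely by resolution. This depends on the chosen encoding of $x_i$ but in every standard choice costs only $\poly(t)$ additional clauses and is therefore absorbed into the $t^{O(w)}$ overhead. Everything else is straightforward bookkeeping.
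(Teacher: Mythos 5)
The paper does not prove this claim itself; it cites it as folklore from \cite{SegerlindBI04,Garg20}, so there is no in-paper argument to compare against. Your reconstruction is correct and is exactly the standard argument from those references: lift each clause~$C$ of~$\pi$ to its canonical set~$C^\star$ of~$t^{O(w)}$ clauses, observe that axioms of~$F$ lift to axioms of~$F \circ \Ind_t^n$, and simulate each resolution step on the pivot~$z_i$ by first resolving away~$y_{i,j}$ for every selector value~$j$ to obtain~$(x_i \neq j) \lor D$, then collapsing these~$t$ clauses on the selector variables (pairwise on the bits if binary-encoded, or via the gadget's totality axiom if one-hot-encoded). One small terminological slip: you call~$D_A \lor D_B$ the ``$y$-part'' of~$D$, but since~$D_A$ and~$D_B$ are the lifts of~$A'$ and~$B'$ under the restricted selector, they already carry their own~$(x_{i_k} \neq j_k)$ literals; thus~$D_A \lor D_B$ is all of~$D$ (after merging shared literals), which is exactly why the collapse terminates at~$D$. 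This does not affect the correctness of your argument or the~$s\, t^{O(w)}$ size bound.
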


With \cref{th:lifting-proofs,cl:liftedres-up} at hand we are ready to
prove \cref{thm:main-proof}. Let us restate it here for convenience.

\TheoremMainProof*

Let us remark that the lower bound in \cref{thm:main-proof} even holds
for the semantic Cutting Planes proof system.

\begin{proof}
  Let~$\varepsilon > 0$ and
  choose~$F \coloneqq \ExBracket \circ \Ind^n_{n^{1 +
      \varepsilon}}$. Recall that the formula~$\ExBracket$ is a
  $3$-CNF formula with~$\poly(n)$ clauses and~$\poly(n)$
  variables. This implies in particular that~$F$ contains~$\poly(n)$
  clauses and is defined over~$\poly(n)$ variables.
  
  Note that~\ref{it:res-ub} follows immediately from
  \cref{thm:width-ub} along with \cref{cl:liftedres-up}. We
  show~\ref{it:res-lb} by contradiction: suppose there is a resolution
  refutation of~$F$ of depth~$n^{\gamma}$ and
  size~$2^{n^{\delta}}$. By \cref{th:lifting-proofs} there exists a
  resolution proof~$\Pi$ of~$\ExBracket$ of
  width~$n^{\delta} / \log n$ and
  depth~$n^{\gamma + \delta} / \log n$. This is in contradiction to
  \cref{thm:depth-lb} for small~$\delta$.
\end{proof}

Let us turn our attention to \cref{thm:main-circuit}, that is, our
main result regarding monotone circuits. The proof follows by similar
arguments as the proof of \cref{thm:main-proof}. We again apply a
dag-like lifting theorem \cite{Garg20}. The below statement appeared
in this form in \cite{Fleming22}.

\begin{theorem}[\cite{Garg20, LovettMMPZ22, Fleming22}]
  \label{th:lifting-ckt}
  Let~$F$ be an unsatisfiable $k$-CNF formula on~$n$ variables and~$m$
  clauses and let~$\varepsilon > 0$ be constant. There is a monotone
  Boolean function $f_F$ on $m n^{k(1 + \varepsilon)} 2^k$ variables
  such that any monotone circuit $C$ computing $f_F$ implies a
  resolution refutation $\Pi$ of $F$ satisfying:
  \begin{itemize}
  \item $\mathrm{size}(\Pi) = O\left( \mathrm{size}(C) \right)$;
  \item
    $\mathrm{depth}(\Pi) = O\left( \mathrm{depth}(C) \frac{\log
        \mathrm{size}(C)}{\log n} \right)$.
  \end{itemize}
  Any resolution refutation $\Pi^*$ of $F$ implies monotone
  circuit~$C$ for $f_F$ of size~$n^{O(\mathrm{width}(\Pi^*))}$.
\end{theorem}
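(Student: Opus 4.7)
The theorem has two directions that I would handle separately. The first direction sends a monotone circuit for $f_F$ to a resolution refutation of $F$ (the DAG-like lifting direction, which is the core content of the theorem). The second converts a resolution refutation back into a monotone circuit (a direct simulation). Both rely on the fact that $f_F$ is essentially the canonical Karchmer--Wigderson monotone function attached to the falsified-clause search problem of $F \circ \Ind_t^n$, so that monotone circuits for $f_F$ correspond to KW-style protocols solving that search problem and vice versa.

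For the refutation-to-circuit direction, I would translate each clause $C$ of a width-$w$ refutation $\Pi^*$ into a monotone formula over the lifted variables. A literal on a lifted variable $z_i$ expands, via the indexing gadget, into a monotone OR of $t$ width-$2$ ANDs over the ``pointer'' variables encoding $x_i$ and the value variables $y_i$; with $x_i$ encoded in the standard way by $2t$ positive/negative indicators, both $z_i$ and $\neg z_i$ become monotone. A clause of width $w$ then translates into a monotone expression of size $t^{O(w)}=n^{O(w)}$, and iterating over all clauses of $\Pi^*$ gives a monotone circuit of size $n^{O(\mathrm{width}(\Pi^*))}$.

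For the circuit-to-refutation direction I would use a Raz--McKenzie / Garg--G\"o\"os--Kamath--Sokolov style simulation. Walking $C$ from the output backwards, at each wire I would maintain a ``structured'' rectangle of inputs (one with large density on every row/column of the indexing gadget) consistent with the current branch of the simulation. The gadget's pseudorandomness then guarantees that a small number of queries to the underlying $F$-variables suffices either to split the current rectangle further (following the Boolean gate at hand) or to expose a falsified clause of $F$. AND gates of $C$ become resolution steps combining the refutations produced on the two children, so the resulting refutation has size $O(\mathrm{size}(C))$; its depth is $\mathrm{depth}(C)$ multiplied by the number of $F$-queries one spends per gate.

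The hard part will be the depth accounting. One needs to show that each gate of $C$ can be simulated while querying only $O(\log\mathrm{size}(C)/\log n)$ $F$-variables, which requires a carefully chosen structural invariant on the rectangles that is preserved across both fan-ins and fan-outs. Pushing the parameters to this tight bound is precisely where the refined analysis of~\cite{LovettMMPZ22} on top of~\cite{Garg20} is needed; it is also why the gadget size is set to $t=n^{1+\varepsilon}$, just large enough for the structural lemmas to apply while keeping the blow-up in variables polynomial. Once this accounting is in place, combining the two directions yields the theorem as stated.
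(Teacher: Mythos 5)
The paper does not prove \cref{th:lifting-ckt}: it is imported verbatim as a black box from the cited works (the surrounding text explicitly notes that the statement ``appeared in this form in~\cite{Fleming22}'' and defers the construction of~$f_F$ to the references). So there is no in-paper proof to compare your sketch against; what can be assessed is whether your sketch matches what those references actually do.

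Your account of the circuit-to-refutation direction is essentially on target: a Raz--McKenzie/GGKS-style top-down simulation of~$C$ maintaining a structured rectangle per wire, with the refinement of~\cite{LovettMMPZ22,Fleming22} providing the $O(\log \mathrm{size}(C)/\log n)$ queries-per-gate accounting and hence the stated depth bound. The reverse direction in your sketch is off, however. The function~$f_F$ is \emph{not} obtained by substituting the indexing gadget into the literals of~$F$ and ``monotonising'' pointer/value variables; it is the canonical unsatisfiability-certificate (CSP-SAT / Karchmer--Wigderson) monotone function attached to the \emph{lifted} formula $F' = F\circ \Ind^n_{n^{1+\varepsilon}}$, and its $m\,n^{k(1+\varepsilon)}2^k$ inputs are indicator variables for (clause-of-$F$, pointer tuple, local assignment) triples, not the $x_i,y_i$ variables of~$F'$. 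The route in the cited works is in two steps: convert a width-$w$ refutation of~$F$ into a size-$n^{O(w)}$ resolution refutation of~$F'$ (this is exactly \cref{cl:liftedres-up} in the present paper), and then apply the standard translation of a resolution refutation of a CNF~$G$ into a monotone circuit of comparable size for the associated canonical monotone function~$f_G$. Your ``expand each literal through the gadget'' construction produces a monotone formula over the wrong variable set and does not by itself compute~$f_F$.
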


The monotone function~$f_F$ can be extracted in several ways
\cite{Razborov90, Gal98, Goos18, HrubesP17} from a CNF formula~$F$. We
refer the interested reader to \cite{Robere18} for a detailed
discussion. Let us restate our main result regarding monotone
circuits.

\TheoremMainCkt*

\begin{proof}
  We use the formula~$F \coloneqq \ExBracket$ and by
  \cref{th:lifting-ckt} we create a monotone function $f_F$. The
  formula $\ExBracket$ is a $3$-CNF formula with~$\poly(n)$ variables
  and clauses. Hence~$f_F$ is defined over~$\poly(n)$ variables

  According to \cref{thm:width-ub} there is a width~$O(\log n)$
  resolution refutation of~$\ExBracket$. Hence~\ref{it:ckt-ub} readily
  follows by \cref{th:lifting-ckt}. 
  We show~\ref{it:ckt-lb} by contradiction: suppose there is a
  monotone circuit computing~$f_F$ in depth~$n^{\gamma}$ and size
  $2^{n^{\delta}}$. By \cref{th:lifting-ckt} there exists a resolution
  proof $\Pi$ of $\ExBracket$ of size $2^{n^{\delta}}$ and depth
  $n^{\gamma + \delta} / \log n$. For small~$\delta$ this contradicts
  \cref{thm:main-proof}.
\end{proof}

\appendix

\section{Omitted Proofs}
\label{sec:proof-width-index-width}

In the following we prove
\cref{prop:index-width,prop:index-width-rev}. Before diving into the
respective proofs we need to record a simple fact.

\begin{claim}
  \label{rm:extension-play}
  Denote by~$k$ the index-width of~$F$ and consider clauses $A, B$
  such that~$A \lor B$ has index-width~$\leq k$. For any clause~$D$
  there is a constant size resolution derivation of
  \begin{itemize}
  \item $D \lor y_{A \lor B}$ from $D \lor y_A \lor y_B$ and~$F'$,
  \item $D \lor y_A \lor y_B$ from $D \lor y_{A \lor B}$ and $F'$.
  \end{itemize}
\end{claim}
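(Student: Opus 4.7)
The plan is to observe that both derivations boil down to direct applications of the extension axioms that encode the equivalence $y_{A \lor B} \leftrightarrow (y_A \lor y_B)$, which by definition belong to $F'$. My first step is to check that these axioms are indeed available: since $A \lor B$ has index-width $\leq k$ and the index-width of a subclause is at most that of the parent clause, both $A$ and $B$ have index-width $\leq k$ as well. Consequently all three variables $y_{A \lor B}, y_A, y_B$ are defined, and by the construction of $F'$ the following three clauses are axioms of $F'$:
\begin{equation*}
  \neg y_{A \lor B} \lor y_A \lor y_B, \qquad y_{A \lor B} \lor \neg y_A, \qquad y_{A \lor B} \lor \neg y_B.
\end{equation*}

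For the first bullet, I would derive $D \lor y_{A \lor B}$ from $D \lor y_A \lor y_B$ in two resolution steps. Resolving $D \lor y_A \lor y_B$ against $y_{A \lor B} \lor \neg y_A$ on the variable $y_A$ yields $D \lor y_B \lor y_{A \lor B}$; then resolving this against $y_{A \lor B} \lor \neg y_B$ on the variable $y_B$ yields $D \lor y_{A \lor B}$ (after absorbing the duplicate occurrence of $y_{A \lor B}$). For the second bullet, a single resolution step of $D \lor y_{A \lor B}$ against $\neg y_{A \lor B} \lor y_A \lor y_B$ on the variable $y_{A \lor B}$ produces $D \lor y_A \lor y_B$.

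Each derivation uses only $O(1)$ resolution steps regardless of the clause $D$, so they are of constant size as required. There is no real obstacle here; the only thing to be careful about is confirming the presence of the three extension axioms, which follows from the observation about index-width made above.
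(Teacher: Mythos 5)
Your proof is correct and takes essentially the same approach as the paper: both derive the result from the extension axioms encoding $y_{A\lor B}\leftrightarrow(y_A\lor y_B)$, with the paper appealing abstractly to implicational completeness of resolution over a constant number of variables where you instead spell out the three resolution steps explicitly.
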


\begin{proof}
  Consider the first statement. Recall that~$F'$ contains clauses that
  ensure that~$y_{A \lor B} = y_A \lor y_B$. Hence~$y_{A \lor B}$
  semantically follows from these clauses of~$F'$ and the
  clause~$y_A \lor y_B$. As resolution is implicationally complete
  there is a resolution derivation of~$y_{A \lor B}$
  from~$y_A \lor y_B$ and the relevant clauses of~$F'$. Since this
  derivation is over a constant number of variables it is of constant
  size. We obtain the desired statement by replacing the
  clause~$y_A \lor y_B$ by~$D \lor y_A \lor y_B$ and propagating this
  replacement through the resolution derivation. The second statement
  follows by a similar consideration.
\end{proof}

\PropositionIndexWidth*

\begin{proof}
  Let~$\pi = (D_1, D_2, \dots, D_s)$ be a resolution refutation of the
  formula~$F$ in index-width~$w$ and depth~$d$. Note that each
  clause~$D_i\in \pi$ can be written
  as~$D_i = C_{i, 1} \lor C_{i, 2} \lor \cdots \lor C_{i, n}$
  where~$C_{i, j}$ is a clause that consists of variables related to
  index~$j$ only.

  Denote by~$\tau$ the function mapping each~$D_i \in \pi$ to the
  clause~$y_{C_{i, 1}} \lor y_{C_{i, 2}} \lor \cdots \lor y_{C_{i,
      n}}$. Let $\pi' \coloneqq (\tau(D_1), \tau(D_2), \dots, \tau(D_s))$ and
  note that every clause in~$\pi'$ has width at most~$w$. The
  sequence~$\pi'$ is \emph{not} a resolution refutation yet. In the
  following we show that any clause~$\tau(D_i)$ can be derived from
  previous clauses in width~$O(w)$ and depth~$O(dw)$ thereby
  establishing the claim.
  Let us consider two cases.
  \begin{enumerate}
  \item The clause~$D_i$ is an axiom of the index-width-$k$
    formula~$F$. Hence~$D_i$ contains variables related to at most~$k$
    distinct indices and we can thus
    write~$D_i = C_{i, j_1} \lor \cdots \lor C_{i, j_k}$. Using
    \cref{rm:extension-play}~$k$ times we obtain a derivation
    of~$y_{C_{i, j_1}} \lor \cdots \lor y_{C_{i, j_k}} = \tau(D_i)$
    from the axioms of~$F'$ where we rely on the fact
    that~$y_{D_i}\in F'$. This establishes the first case.
    
  \item The clause~$D_i$ is obtained by an application of the
    resolution rule from clauses~$A \lor x$ and~$B \lor \neg x$. In
    this case, we have already derived $\tau(A \lor x)$ and
    $\tau(B \lor \neg x)$. According to \cref{rm:extension-play} there
    is a constant sized resolution derivation of~$\tau(A) \lor y_x$
    and~$\tau(B) \lor y_{\neg x}$ from~$\tau(A \lor x)$,
    $\tau(B \lor \neg x)$ and~$F'$. From~$\tau(A) \lor y_x$
    and~$\tau(B) \lor y_{\neg x}$ and the
    axiom~$\neg y_x \lor \neg y_{\neg x}$ we can
    derive~$\tau(A) \lor \tau(B)$ by applying the resolution rule
    twice.

    We can
    write~$\tau(A) \lor \tau(B) = y_{A_1} \lor y_{A_2} \lor \dots \lor
    y_{A_n} \lor y_{B_1} \lor y_{B_2} \lor \dots \lor y_{B_n}$
    where~$A_i \subseteq A$ ($B_i \subseteq B$) is the part of~$A$
    (of~$B$) related to index $i$. Since the clauses~$A,B$ are of
    index-width at most~$w-1$ there are at most~$2w - 2$ indices in
    the above expansion. Hence by appealing at most $w$ times to
    \cref{rm:extension-play} we obtain
    $y_{A_1 \lor B_1} \lor y_{A_2 \lor B_2} \lor \dots \lor y_{A_n
      \lor B_n} = \tau(D_i)$ as required. \qedhere
  \end{enumerate}
\end{proof}

\PropositionIndexWidthRev*

\begin{proof}
  Consider a width-$w$ depth-$d$ resolution
  refutation~$\pi = (D_1, D_2, \dots, D_s)$ of the formula~$F'$. To
  show the existence of a small depth and index-width resolution proof
  of $F$ we exhibit a strategy for the prover to win the
  prover--adversary game (see \cref{sec:pa-games}) in
  memory-size~$5w$.

  A partial assignment~$\rho \in (\Sigma\cup\{*\})^n$ induces a
  partial assignment~$\mu_{\rho}$ to the variables of~$F'$
  by~$\mu_{\rho}(y_D) = D|_{\rho}$, where $\mu_\rho(y_D) = *$ if and
  only if $\rho$ maps all the variables of $D$ into $\{0, *\}$ and at
  least one variable is mapped to $*$.

  Throughout the strategy the prover maintains a clause~$D$ and a
  partial assignment~$\rho \in (\Sigma\cup\{*\})^n$ such that
  \begin{itemize}
  \item $\mu_{\rho}$ falsifies~$D$, and
  \item $\rho$ is minimal with the above property.
  \end{itemize}
  The prover attempts to trace a path in the graph of the resolution
  proof~$\pi$ starting from contradiction~$D_s = \bot$. In each round
  the prover updates~$D$ to one of the premises of~$D$ to eventually
  reach an axiom of~$F'$. Since by definition~$\mu_{\rho}$ cannot
  violate an extension axiom of~$F'$ the prover must reach a
  clause~$y_D$ for~$D \in F$. Since the assignment~$\mu_{\rho}$
  falsifies~$y_D$ if and only if~$\rho$ falsifies~$D$ the prover wins
  the game.
  
  Initially~$\rho \coloneqq *^n$. Suppose that the prover remembers
  the clause~$D_i$ and that $\rho$ satisfies the aforementioned
  properties. If~$D_i$ is derived from~$D_j = A \lor y_C$
  and~$D_k = B \lor \neg y_C$ by an application of the resolution rule
  the prover queries all symbols mentioned by~$C$ (of constant
  index-width) and updates~$\rho$ accordingly. Either~$D_j$ or~$D_k$
  is falsified by $\mu_\rho$. Update~$D_j$ to point to the falsified
  clause and shrink~$\rho$.
\end{proof}

\bigskip\bigskip
\subsubsection*{Acknowledgements}

We thank Alexandros Hollender (who declined a coauthorship) for
discussions that inspired the bracket principle.  M.G., G.M., and
D.S.\ are supported by the Swiss State Secretariat for Education,
Research and Innovation (SERI) under contract number MB22.00026. K.R.\
is supported by Swiss National Science Foundation project
\mbox{200021-184656} “Randomness in Problem Instances and Randomized
Algorithms”.

\pagebreak

\DeclareUrlCommand{\Doi}{\urlstyle{sf}}
\renewcommand{\path}[1]{\small\Doi{#1}}
\renewcommand{\url}[1]{\href{#1}{\small\Doi{#1}}}
\bibliographystyle{alphaurl}
\bibliography{references}

\newcommand{\etalchar}[1]{$^{#1}$}
\begin{thebibliography}{dRMN{\etalchar{+}}20}

\bibitem[AD08]{Atserias08}
Albert Atserias and V{\'\i}ctor Dalmau.
\newblock A combinatorial characterization of resolution width.
\newblock {\em Journal of Computer and System Sciences}, 74(3):323--334, 2008.
\newblock \href {https://doi.org/10.1016/j.jcss.2007.06.025} {\path{doi:10.1016/j.jcss.2007.06.025}}.

\bibitem[ADD16]{Adler16}
Aviv Adler, Constantinos Daskalakis, and Erik Demaine.
\newblock The complexity of hex and the {J}ordan curve theorem.
\newblock In {\em Proceedings of the 43rd International Colloquium on Automata, Languages, and Programming (ICALP)}, volume~55 of {\em LIPIcs}, pages 24:1--24:14. Schloss Dagstuhl, 2016.
\newblock \href {https://doi.org/10.4230/LIPIcs.ICALP.2016.24} {\path{doi:10.4230/LIPIcs.ICALP.2016.24}}.

\bibitem[BBI16]{Beame16}
Paul Beame, Christopher Beck, and Russell Impagliazzo.
\newblock Time-space trade-offs in resolution: Superpolynomial lower bounds for superlinear space.
\newblock {\em SIAM Journal on Computing}, 45(4):1612--1645, 2016.
\newblock \href {https://doi.org/10.1137/130914085} {\path{doi:10.1137/130914085}}.

\bibitem[Ber12]{Berkholz12}
Christoph Berkholz.
\newblock On the complexity of finding narrow proofs.
\newblock In {\em Proceedings of the 53rd Symposium on Foundations of Computer Science (FOCS)}, volume~40, pages 351--360. IEEE, 2012.
\newblock \href {https://doi.org/10.1109/FOCS.2012.48} {\path{doi:10.1109/FOCS.2012.48}}.

\bibitem[BLV24]{Berkholz24}
Christoph Berkholz, Moritz Lichter, and Harry Vinall{-}Smeeth.
\newblock Supercritical size-width tree-like resolution trade-offs for graph isomorphism.
\newblock Technical report, arXiv, 2024.
\newblock \href {https://doi.org/10.48550/ARXIV.2407.17947} {\path{doi:10.48550/ARXIV.2407.17947}}.

\bibitem[BN20]{Berkholz20}
Christoph Berkholz and Jakob Nordstr{\"{o}}m.
\newblock Supercritical space-width trade-offs for resolution.
\newblock {\em SIAM Journal on Computing}, 49(1):98--118, 2020.
\newblock \href {https://doi.org/10.1137/16M1109072} {\path{doi:10.1137/16M1109072}}.

\bibitem[BN23]{Berkholz23}
Christoph Berkholz and Jakob Nordstr{\"{o}}m.
\newblock Near-optimal lower bounds on quantifier depth and {W}eisfeiler--{L}eman refinement steps.
\newblock {\em Journal of the ACM}, 70(5):32:1--32:32, 2023.
\newblock \href {https://doi.org/10.1145/3195257} {\path{doi:10.1145/3195257}}.

\bibitem[BNT13]{Beck13}
Chris Beck, Jakob Nordstr{\"{o}}m, and Bangsheng Tang.
\newblock Some trade-off results for polynomial calculus: extended abstract.
\newblock In {\em Proceedings of the 45th Symposium on Theory of Computing (STOC)}, pages 813--822. ACM, 2013.
\newblock \href {https://doi.org/10.1145/2488608.2488711} {\path{doi:10.1145/2488608.2488711}}.

\bibitem[BS06]{BureshOppenheim06}
Joshua Buresh{-}Oppenheim and Rahul Santhanam.
\newblock Making hard problems harder.
\newblock In {\em Proceedings of the 21st Conference on Computational Complexity (CCC)}, pages 73--87. IEEE, 2006.
\newblock \href {https://doi.org/10.1109/ccc.2006.26} {\path{doi:10.1109/ccc.2006.26}}.

\bibitem[BT24]{Buss24}
Sam Buss and Neil Thapen.
\newblock A simple supercritical tradeoff between size and height in resolution.
\newblock Technical Report TR24-001, Electronic Colloquium on Computational Complexity (ECCC), 2024.
\newblock URL: \url{https://eccc.weizmann.ac.il/report/2024/001}.

\bibitem[BW01]{BenSasson01}
Eli {Ben-Sasson} and Avi Wigderson.
\newblock Short proofs are narrow---resolution made simple.
\newblock {\em Journal of the ACM}, 48(2):149--169, 2001.
\newblock \href {https://doi.org/10.1145/375827.375835} {\path{doi:10.1145/375827.375835}}.

\bibitem[CD24]{Chattopadhyay24}
Arkadev Chattopadhyay and Pavel Dvo\v{r}\'ak.
\newblock Super-critical trade-offs in resolution over parities via lifting.
\newblock Technical Report TR24-132, Electronic Colloquium on Computational Complexity (ECCC), 2024.
\newblock URL: \url{https://eccc.weizmann.ac.il/report/2024/132/}.

\bibitem[dRFJ{\etalchar{+}}24]{Rezende24}
Susanna de~Rezende, Noah Fleming, Duri Janett, Jakob Nordstr{\"o}m, and Shuo Pang.
\newblock Truly supercritical trade-offs for resolution, cutting planes, monotone circuits, and {W}eisfeiler--{L}eman.
\newblock Technical report, 2024.

\bibitem[dRMN{\etalchar{+}}20]{Rezende20}
Susanna de~Rezende, Or~Meir, Jakob Nordstr{\"o}m, Toniann Pitassi, Robert Robere, and Marc Vinyals.
\newblock Lifting with simple gadgets and applications to circuit and proof complexity.
\newblock In {\em Proceedings of the 61st Symposium on Foundations of Computer Science (FOCS)}, volume 169, pages 24--30. IEEE, 2020.
\newblock \href {https://doi.org/10.1109/FOCS46700.2020.00011} {\path{doi:10.1109/FOCS46700.2020.00011}}.

\bibitem[dRNV16]{Rezende16}
Susanna de~Rezende, Jakob Nordstr{\"o}m, and Marc Vinyals.
\newblock How limited interaction hinders real communication (and what it means for proof and circuit complexity).
\newblock In {\em Proceedings of the 57th Symposium on Foundations of Computer Science (FOCS)}, pages 295--304, 2016.
\newblock \href {https://doi.org/10.1109/FOCS.2016.40} {\path{doi:10.1109/FOCS.2016.40}}.

\bibitem[FGI{\etalchar{+}}21]{Fleming21}
Noah Fleming, Mika G\"{o}\"{o}s, Russell Impagliazzo, Toniann Pitassi, Robert Robere, Li-Yang Tan, and Avi Wigderson.
\newblock On the power and limitations of branch and cut.
\newblock In {\em Proceedings of the 36th Conference on Computational Complexity (CCC)}. Schloss Dagstuhl, 2021.
\newblock \href {https://doi.org/10.4230/LIPIcs.CCC.2021.6} {\path{doi:10.4230/LIPIcs.CCC.2021.6}}.

\bibitem[FGMS20]{Fearnley2020}
John Fearnley, Spencer Gordon, Ruta Mehta, and Rahul Savani.
\newblock Unique end of potential line.
\newblock {\em Journal of Computer and System Sciences}, 114:1--35, 2020.
\newblock \href {https://doi.org/10.1016/j.jcss.2020.05.007} {\path{doi:10.1016/j.jcss.2020.05.007}}.

\bibitem[FPR22]{Fleming22}
Noah Fleming, Toniann Pitassi, and Robert Robere.
\newblock Extremely deep proofs.
\newblock In {\em Proceedings of the 13th Innovations in Theoretical Computer Science Conference (ITCS)}, volume 215 of {\em LIPIcs}, pages 70:1--70:23. Schloss Dagstuhl, 2022.
\newblock \href {https://doi.org/10.4230/LIPICS.ITCS.2022.70} {\path{doi:10.4230/LIPICS.ITCS.2022.70}}.

\bibitem[G{\'{a}}l98]{Gal98}
Anna G{\'{a}}l.
\newblock A characterization of span program size and improved lower bounds for monotone span programs.
\newblock In {\em Proceedings of the 30th Symposium on Theory of Computing (STOC)}, pages 429--437. ACM, 1998.
\newblock \href {https://doi.org/10.1145/276698.276855} {\path{doi:10.1145/276698.276855}}.

\bibitem[GGKS20]{Garg20}
Ankit Garg, Mika G{\"o}{\"o}s, Pritish Kamath, and Dmitry Sokolov.
\newblock Monotone circuit lower bounds from resolution.
\newblock {\em Theory of Computing}, 16(1):1--30, 2020.
\newblock \href {https://doi.org/10.4086/toc.2020.v016a013} {\path{doi:10.4086/toc.2020.v016a013}}.

\bibitem[GHJ{\etalchar{+}}24a]{Goos24collapse}
Mika G{\"o}{\"o}s, Alexandros Hollender, Siddhartha Jain, Gilbert Maystre, William Pires, Robert Robere, and Ran Tao.
\newblock Further collapses in {TFNP}.
\newblock {\em SIAM Journal on Computing}, 53(3):573--587, 2024.
\newblock \href {https://doi.org/10.1137/22M1498346} {\path{doi:10.1137/22M1498346}}.

\bibitem[GHJ{\etalchar{+}}24b]{Goos24tfnp}
Mika G\"{o}\"{o}s, Alexandros Hollender, Siddhartha Jain, Gilbert Maystre, William Pires, Robert Robere, and Ran Tao.
\newblock Separations in proof complexity and {TFNP}.
\newblock {\em Journal of the ACM}, 71(4), 2024.
\newblock \href {https://doi.org/10.1145/3663758} {\path{doi:10.1145/3663758}}.

\bibitem[GLN23]{Grohe23}
Martin Grohe, Moritz Lichter, and Daniel Neuen.
\newblock The iteration number of the {W}eisfeiler--{L}eman algorithm.
\newblock In {\em Proceedings of the 38th Symposium on Logic in Computer Science (LICS)}, pages 1--13, 2023.
\newblock \href {https://doi.org/10.1109/LICS56636.2023.10175741} {\path{doi:10.1109/LICS56636.2023.10175741}}.

\bibitem[GLNS23]{Grohe23a}
Martin Grohe, Moritz Lichter, Daniel Neuen, and Pascal Schweitzer.
\newblock Compressing {CFI} graphs and lower bounds for the {W}eisfeiler--{L}eman refinements.
\newblock In {\em Proceedings of the 64th Symposium on Foundations of Computer Science (FOCS)}, pages 798--809. IEEE, 2023.
\newblock \href {https://doi.org/10.1109/FOCS57990.2023.00052} {\path{doi:10.1109/FOCS57990.2023.00052}}.

\bibitem[GNRS24]{Goos24}
Mika G\"{o}\"{o}s, Ilan Newman, Artur Riazanov, and Dmitry Sokolov.
\newblock Hardness condensation by restriction.
\newblock In {\em Proceedings of the 56th Symposium on Theory of Computing (STOC)}, pages 2016--2027. ACM, 2024.
\newblock \href {https://doi.org/10.1145/3618260.3649711} {\path{doi:10.1145/3618260.3649711}}.

\bibitem[GP18]{Goos18}
Mika G{\"o}{\"o}s and Toniann Pitassi.
\newblock Communication lower bounds via critical block sensitivity.
\newblock {\em SIAM Journal on Computing}, 47(5):1778--1806, 2018.
\newblock \href {https://doi.org/10.1137/16M1082007} {\path{doi:10.1137/16M1082007}}.

\bibitem[HMR24]{Hollender24}
Alexandros Hollender, Gilbert Maystre, and Kilian Risse.
\newblock Unpublished manuscript, 2024.

\bibitem[HP17]{HrubesP17}
Pavel Hrube\v{s} and Pavel Pudl{\'{a}}k.
\newblock Random formulas, monotone circuits, and interpolation.
\newblock In {\em Proceedings of the 58th Symposium on Foundations of Computer Science (FOCS)}, pages 121--131. IEEE, 2017.
\newblock \href {https://doi.org/10.1109/FOCS.2017.20} {\path{doi:10.1109/FOCS.2017.20}}.

\bibitem[Hru24]{Hrubes24}
Pavel Hrube\v{s}.
\newblock Hard submatrices for non-negative rank and communication complexity.
\newblock In {\em Proceedings of the 39th Conference on Computational Complexity (CCC)}, volume 300 of {\em LIPIcs}, pages 13:1--13:12. Schloss Dagstuhl, 2024.
\newblock \href {https://doi.org/10.4230/LIPIcs.CCC.2024.13} {\path{doi:10.4230/LIPIcs.CCC.2024.13}}.

\bibitem[Juk12]{Jukna12}
Stasys Jukna.
\newblock {\em Boolean Function Complexity: Advances and Frontiers}, volume~27 of {\em Algorithms and Combinatorics}.
\newblock Springer, 2012.
\newblock \href {https://doi.org/10.1007/978-3-642-24508-4} {\path{doi:10.1007/978-3-642-24508-4}}.

\bibitem[KPPY84]{Klawe84}
Maria Klawe, Wolfgang Paul, Nicholas Pippenger, and Mihalis Yannakakis.
\newblock On monotone formulae with restricted depth.
\newblock In {\em Proceedings of the 16th Symposium on Theory of Computing (STOC)}, pages 480--487. ACM, 1984.
\newblock \href {https://doi.org/10.1145/800057.808717} {\path{doi:10.1145/800057.808717}}.

\bibitem[KW88]{Karchmer88}
Mauricio Karchmer and Avi Wigderson.
\newblock Monotone circuits for connectivity require super-logarithmic depth.
\newblock In {\em Proceedings of the 20th Symposium on Theory of Computing (STOC)}, pages 539--550. ACM, 1988.
\newblock \href {https://doi.org/10.1145/62212.62265} {\path{doi:10.1145/62212.62265}}.

\bibitem[LMM{\etalchar{+}}22]{LovettMMPZ22}
Shachar Lovett, Raghu Meka, Ian Mertz, Toniann Pitassi, and Jiapeng Zhang.
\newblock Lifting with sunflowers.
\newblock In {\em Proceedings of the 13th Innovations in Theoretical Computer Science Conference (ITCS)}, volume 215 of {\em LIPIcs}, pages 104:1--104:24. Schloss Dagstuhl, 2022.
\newblock \href {https://doi.org/10.4230/LIPICS.ITCS.2022.104} {\path{doi:10.4230/LIPICS.ITCS.2022.104}}.

\bibitem[PR23]{Papamakarios23}
Theodoros Papamakarios and Alexander Razborov.
\newblock Space characterizations of complexity measures and size-space trade-offs in propositional proof systems.
\newblock {\em Journal of Computer and System Sciences}, 137:20--36, 2023.
\newblock \href {https://doi.org/10.1016/j.jcss.2023.04.006} {\path{doi:10.1016/j.jcss.2023.04.006}}.

\bibitem[Pud00]{Pudlak00}
Pavel Pudl\'{a}k.
\newblock Proofs as games.
\newblock {\em American Mathematical Monthly}, pages 541--550, 2000.
\newblock \href {https://doi.org/10.2307/2589349} {\path{doi:10.2307/2589349}}.

\bibitem[Raz90]{Razborov90}
Alexander Razborov.
\newblock Applications of matrix methods to the theory of lower bounds in computational complexity.
\newblock {\em Combinatorica}, 10(1):81--93, 1990.
\newblock \href {https://doi.org/10.1007/BF02122698} {\path{doi:10.1007/BF02122698}}.

\bibitem[Raz16]{Razborov16}
Alexander Razborov.
\newblock A new kind of tradeoffs in propositional proof complexity.
\newblock {\em Journal of the ACM}, 63(2):16:1--16:14, 2016.
\newblock \href {https://doi.org/10.1145/2858790} {\path{doi:10.1145/2858790}}.

\bibitem[Raz17a]{Razborov17space}
Alexander Razborov.
\newblock On space and depth in resolution.
\newblock {\em Computational Complexity}, 27(3):511--559, 2017.
\newblock \href {https://doi.org/10.1007/s00037-017-0163-1} {\path{doi:10.1007/s00037-017-0163-1}}.

\bibitem[Raz17b]{Razborov17cp}
Alexander Razborov.
\newblock On the width of semialgebraic proofs and algorithms.
\newblock {\em Mathematics of Operations Research}, 42(4):1106--1134, 2017.
\newblock \href {https://doi.org/10.1287/moor.2016.0840} {\path{doi:10.1287/moor.2016.0840}}.

\bibitem[RM99]{Raz99}
Ran Raz and Pierre McKenzie.
\newblock Separation of the monotone {NC} hierarchy.
\newblock {\em Combinatorica}, 19(3):403--435, 1999.
\newblock \href {https://doi.org/10.1007/s004930050062} {\path{doi:10.1007/s004930050062}}.

\bibitem[Rob18]{Robere18}
Robert Robere.
\newblock {\em Unified Lower Bounds for Monotone Computation}.
\newblock PhD thesis, University of Toronto, Canada, 2018.
\newblock URL: \url{http://hdl.handle.net/1807/92007}.

\bibitem[Ros15]{Rossman15}
Benjamin Rossman.
\newblock Correlation bounds against monotone {NC$^1$}.
\newblock In {\em Proceedings of the 30th Conference on Computational Complexity (CCC)}, volume~33 of {\em LIPIcs}, pages 392--411. Schloss Dagstuhl, 2015.
\newblock \href {https://doi.org/10.4230/LIPIcs.CCC.2015.392} {\path{doi:10.4230/LIPIcs.CCC.2015.392}}.

\bibitem[Ros24]{Rossman24}
Benjamin Rossman.
\newblock Formula size-depth tradeoffs for iterated sub-permutation matrix multiplication.
\newblock In {\em Proceedings of the 56th Symposium on Theory of Computing (STOC)}, pages 1386--1395. ACM, 2024.
\newblock \href {https://doi.org/10.1145/3618260.3649628} {\path{doi:10.1145/3618260.3649628}}.

\bibitem[SBI04]{SegerlindBI04}
Nathan Segerlind, Sam Buss, and Russell Impagliazzo.
\newblock A switching lemma for small restrictions and lower bounds for $k$-{DNF} resolution.
\newblock {\em SIAM Journal on Computing}, 33(5):1171--1200, 2004.
\newblock \href {https://doi.org/10.1137/S0097539703428555} {\path{doi:10.1137/S0097539703428555}}.

\end{thebibliography}

\end{document}